\documentclass[11pt]{amsart}
\pdfoutput=1

\usepackage{amssymb,amsmath,amsthm}
\usepackage{latexsym}
\usepackage{arydshln}
\usepackage{graphicx}
\usepackage{tikz-cd}
\usepackage{tikz}
\usetikzlibrary{positioning}
\usetikzlibrary{arrows.meta}
\usepackage{mathtools,amssymb}
\usepackage{fancyhdr}
\usepackage{caption} 
\usepackage{array}
\usepackage{multirow, hhline}
\usepackage[ruled,vlined,linesnumbered]{algorithm2e}
\usepackage{url}
\usepackage{booktabs}
\usepackage{geometry}
\usepackage{subcaption}

\newtheorem{theorem}{Theorem}[section]

\newtheorem{lemma}[theorem]{Lemma}
\newtheorem{proposition}[theorem]{Proposition}
\theoremstyle{definition}
\newtheorem{definition}[theorem]{Definition}
\theoremstyle{remark}
\newtheorem{remark}[theorem]{Remark}
\numberwithin{equation}{section}
\newcommand{\Real}{\mathbb R}

\newcommand{\set}[1]{\left\{#1\right\}}

\begin{document}

\title{Denoising data reduction algorithm for Topological Data Analysis}

\author{Seonmi Choi}
\author{Semin Oh}
\author{Jeong Rye Park}
\author{Seung Yeop Yang}

\email[Seonmi Choi]{smchoi@seowon.ac.kr}
\email[Semin Oh]{semin@knu.ac.kr}
\email[Jeong Rye Park]{parkjr@gist.ac.kr}
\email[Seung Yeop Yang]{seungyeop.yang@knu.ac.kr}

\address{Department of Mathematics Education, Seowon University, Cheongju, 28674, Republic of Korea}
\address{KNU G-LAMP Research Center, KNU Institute of Basic Sciences, Kyungpook National University, Daegu, 41566, Republic of Korea}
\address{Department of Mathematical Sciences, Gwangju Institute of Science and Technology, Gwangju 61005, Republic of Korea}
\address{KNU G-LAMP Research Center, KNU Institute of Basic Sciences, Department of Mathematics, Kyungpook National University, Daegu, 41566, Republic of Korea}

\subjclass{}%

\begin{abstract}

Persistent homology is a central tool in topological data analysis, but its application to large and noisy datasets is often limited by computational cost and the presence of spurious topological features. Noise not only increases data size but also obscures the underlying structure of the data.

In this paper, we propose the Refined Characteristic Lattice Algorithm (RCLA), a grid-based method that integrates data reduction with threshold-based denoising in a single procedure. By incorporating a threshold parameter $k$, RCLA removes noise while preserving the essential structure of the data in a single pass. We further provide a theoretical guarantee by proving a stability theorem under a homogeneous Poisson noise model, which bounds the bottleneck distance between the persistence diagrams of the output and the underlying shape with high probability. In addition, we introduce an automatic parameter selection method based on nearest-neighbor statistics. Experimental results demonstrate that RCLA consistently outperforms existing methods, and its effectiveness is further validated on a 3D shape classification task.
\end{abstract}

\keywords{Topological data analysis, Persistent homology, Topology-preserving data reduction, Denoising, Stability Theorem, Poisson Point Process}

\subjclass[2020]{Primary: 55N31. Secondary: 62R40, 68T09.}

\maketitle


\section{Introduction}\label{Intro}


   
   



Persistent homology is a central tool in topological data analysis (TDA) that captures topological features of data shapes, including connected components, loops, higher-dimensional voids, and related structures.
It has found applications in diverse areas including genomics~\cite{RabBlu20}, shape analysis~\cite{LSLIVACC13}, and scientific data exploration. For general background on TDA, see~\cite{Car09, EdeHar10, Ghr08, ZomCar05}.
However, the direct computation of persistent homology via the Vietoris--Rips complex faces a fundamental scalability challenge: the number of simplices grows exponentially in the size of the input point cloud, making the computation prohibitively expensive for large datasets.
To address this issue, various preprocessing methods have been proposed, including witness complexes~\cite{SC04}, random projection~\cite{RVT14, She14}, linear-size approximation of the Vietoris--Rips filtration~\cite{She13}, subsampling~\cite{CFLMRW15}, and cluster-based data reduction~\cite{MSW20, MW19, MMW18}. See~\cite{OPTGH17} for a comprehensive survey.

In practice, observed data is often corrupted by background noise.
Such noise points not only unnecessarily increase the data size but also introduce spurious features in the persistence diagram, thereby obscuring the genuine topological structure of the underlying shape.
Existing data reduction methods do not distinguish between noise and signal, so noise persists in the reduced output.
On the other hand, denoising methods such as adaptive DBSCAN~\cite{ZZLLL24}, LDOF~\cite{ZHJ09}, and LUNAR~\cite{GHNN22} can remove noise, but they require a separate preprocessing step and offer no guarantee on how well the topological features are preserved after denoising.
From a practical standpoint, it is therefore desirable to develop a single-step method that simultaneously reduces data size and removes noise, without requiring manual parameter tuning.

In previous work~\cite{COPYY}, the authors introduced the Characteristic Lattice Algorithm (CLA), a grid-based data reduction method for TDA.
CLA partitions the ambient space into hypercubes of side length $\delta$ and selects a representative point from each non-empty hypercube.
The method is simple to implement and computationally efficient, and has been shown to be competitive with existing data reduction techniques in terms of both speed and accuracy.
However, since CLA selects a representative from every non-empty hypercube, including those containing only noise points, noise is inevitably carried over into the reduced output. This limitation highlights the need for a method that can simultaneously reduce data size while effectively eliminating noise, thus motivating the present work.\\

In this paper, we propose the Refined Characteristic Lattice Algorithm (RCLA), an enhanced variant of the CLA that incorporates a threshold parameter $k$ to discard hypercubes containing fewer than $k$ points. This refinement enables simultaneous data reduction and noise removal within a single pass of the algorithm.
Moreover, the grid structure underlying RCLA is naturally amenable to probabilistic analysis.
When background noise is modeled as a Homogeneous Poisson Point Process (HPPP), the noise count in each cell becomes an independent Poisson random variable, which allows us to derive quantitative guarantees on the topological fidelity of the output.
The contributions of this paper are as follows:
\begin{enumerate}
    \item We propose RCLA, a denoising data reduction algorithm that combines grid-based data reduction with threshold-based filtering in a single pass, eliminating the need for a separate denoising step (Section~\ref{CLA_RCLA}). 
    \item We prove a stability theorem showing that, under the HPPP noise model, the bottleneck distance between the persistence diagrams of the RCLA output and the underlying shape is at most $\sqrt{m}\,\delta$ with high probability, where the confidence level is determined explicitly by the noise intensity and the grid configuration (Section~\ref{StabilityRCLA}).
    \item We present a fully automatic procedure for selecting the parameters $(\delta, k)$ from the data, based on nearest-neighbor distance distributions and Bayesian estimation. All experiments in this paper use the default settings without manual tuning (Section~\ref{ImplementationRCLA}).
    \item We validate RCLA experimentally on synthetic data, demonstrating that it consistently outperforms CLA and three denoising algorithms (Adaptive DBSCAN, LDOF, and LUNAR) on the tested configurations (Section~\ref{Experiments}). We further apply RCLA to a 3D shape classification task, achieving approximately $94.5\%$ data compression and $99.88\%$ classification accuracy (Section~\ref{Implementation}).
\end{enumerate}

The paper is organized as follows.
Section~\ref{Preliminaries} reviews background on persistent homology, including the bottleneck distance and the stability theorem.
Section~\ref{RCLA} introduces the RCLA algorithm, presents an automatic parameter selection procedure, and proves a stability theorem for RCLA.
Section~\ref{Experiments} presents experimental comparisons on synthetic data.
Section~\ref{Implementation} demonstrates an application to 3D shape classification.
Section~\ref{Conclusion} provides concluding remarks.

\vspace{0.5cm}
\section{Preliminaries}\label{Preliminaries}

We recall the Vietoris–Rips complex, the persistent homology, and its representations via barcodes and persistence diagrams. 
We also introduce standard distances on persistence diagrams and state the stability theorem.
See \cite{EdeHar10, RabBlu20} for more details.\\


Let $(X, d_X)$ be a finite metric space and let $\varepsilon > 0$.  
The \emph{Vietoris--Rips complex} $VR_\varepsilon(X, d_X)$ is an abstract simplicial complex defined as follows:
\begin{itemize}
    \item[(1)] The vertex set of $VR_\varepsilon(X, d_X)$ is $X$ itself,
    \item[(2)] A subset $\{v_0, v_1, \dots, v_k\} \subseteq X$ forms a $k$-simplex in $VR_{\varepsilon}(X, d_X)$ if  
    \[
    d_X(v_i, v_j) \leq 2\varepsilon \quad \text{for all } 0 \leq i, j \leq k.
    \]
\end{itemize}

\medskip

Given $\varepsilon < \varepsilon'$, it holds that $VR_{\varepsilon}(X, d_X) \subseteq VR_{\varepsilon'}(X, d_X)$, and the inclusion of simplicial complexes induces a simplicial map
$VR_{\varepsilon}(X, d_X) \longrightarrow VR_{\varepsilon'}(X, d_X)$.
Since $X$ is finite, there exist finitely many critical values 
$\varepsilon_1 < \varepsilon_2 < \cdots < \varepsilon_m$
such that 
\[
VR_{\varepsilon_1}(X, d_X) \subsetneq VR_{\varepsilon_2}(X, d_X) \subsetneq \cdots \subsetneq VR_{\varepsilon_m}(X, d_X).
\]
The resulting filtration $\{VR_{\varepsilon_i}(X, d_X)\}_{i=1}^m$ is referred to as the \emph{filtered Vietoris--Rips complex} on $X$.
Applying the $k$th persistent homology functor with coefficients in a field $\mathbb{F}$ yields the induced homomorphisms
\[
PH_k(VR_{\varepsilon_1}(X, d_X); \mathbb{F}) \rightarrow PH_k(VR_{\varepsilon_2}(X, d_X); \mathbb{F}) \rightarrow \cdots \rightarrow PH_k(VR_{\varepsilon_m}(X, d_X); \mathbb{F}).
\]


For $\varepsilon_a<\varepsilon_b$, let $\theta_{a,b}\colon PH_k(VR_{\varepsilon_a}(X,d_X);\mathbb{F})\to PH_k(VR_{\varepsilon_b}(X,d_X);\mathbb{F})$
denote the homomorphism induced by the inclusion $VR_{\varepsilon_a}(X,d_X)\hookrightarrow VR_{\varepsilon_b}(X,d_X)$.
For an element $\gamma\in PH_k(VR_{\varepsilon_i}(X,d_X);\mathbb{F})$, we say that $\gamma$
\begin{itemize}
    \item[(1)] is \emph{born at $\varepsilon_i$} if $\gamma\notin \mathrm{Im}(\theta_{j,i})$ for all $\varepsilon_j<\varepsilon_i$,
    \item[(2)] \emph{dies at $\varepsilon_l>\varepsilon_i$} if either $\theta_{i,l}(\gamma)=0$ in $PH_k(VR_{\varepsilon_l}(X,d_X);\mathbb{F})$, or there exist $\varepsilon_j<\varepsilon_l$ and $\alpha\in PH_k(VR_{\varepsilon_j}(X,d_X);\mathbb{F})$ such that
$\theta_{i,l}(\gamma)=\theta_{j,l}(\alpha)$.
\end{itemize}
The half-open interval $[\varepsilon_i,\varepsilon_l)$ is called the \emph{bar} associated to $\gamma$, and the multiset of all such intervals in degree $k$ is the $k$th \emph{barcode}, denoted by $B_k(X)$.
Bars with large length $\varepsilon_l-\varepsilon_i$ represent persistent topological features, whereas short bars are typically interpreted as noise.

By associating to each bar $[a,b)\in B_k(X)$ the point $(a,b)\in\mathbb{R}^2$, we obtain an equivalent representation of the barcode, called the \emph{persistence diagram} (PD). It is denoted by $PD_k(X)$.

To quantify the effect of perturbations on persistent homology, we compare persistence diagrams via the bottleneck distance. 
The stability theorem is formulated with respect to this distance.

For two non-empty subsets $C$ and $D$ of a metric space $(X, d_{X})$, define the \emph{Hausdorff distance} between $C$ and $D$ as
\[
d_H(C, D) = \inf \{ \varepsilon > 0 \mid D \subseteq C_{\varepsilon}, \; C \subseteq D_{\varepsilon} \},
\]
where $C_{\varepsilon}$ and $D_{\varepsilon}$ denote the $\varepsilon$-neighborhoods of $C$ and $D$, respectively.

\begin{definition}
The \emph{Gromov--Hausdorff distance} between two compact metric spaces $(X, d_X)$ and $(Y, d_Y)$ is given by  
\[
d_{GH} \left((X, d_X), (Y, d_Y)\right) = \inf_{\substack{\theta_1: X \to Z \\ \theta_2: Y \to Z}} d_H \left( \theta_1(X), \theta_2(Y) \right),
\]
where $\theta_1$ and $\theta_2$ are isometric embeddings of $(X,d_X)$ and $(Y,d_Y)$, respectively, into a common metric space $(Z,d_Z)$.
\end{definition}

For intervals $[a_1,b_1)$ and $[a_2,b_2)$, define
\[
d_\infty \left([a_1,b_1),[a_2,b_2) \right)=\max\{|a_1-a_2|,\ |b_1-b_2|\},
\qquad 
d_\infty \left([a_1,b_1),\varnothing \right)=\frac{b_1-a_1}{2}.
\]
\begin{definition}
Let $B_1$ and $B_2$ be barcodes. 
The \emph{bottleneck distance} between $B_1$ and $B_2$ is defined by
\[
d_B(B_1,B_2)=\inf_{\phi}\ \sup_{I\in B_1} d_\infty\bigl(I,\phi(I)\bigr),
\]
where a matching is a bijection $\phi : C_1 \to C_2$ between sub-multisets $C_1\subseteq B_1$ and $C_2\subseteq B_2$ such that any interval not in $C_i$ is paired with $\varnothing$.
\end{definition}



\begin{proposition}\cite{CC-SGMO09, C-SEH07}\label{PropBD}
Let $(X,d_{X})$ and $(Y,d_{Y})$ be finite metric spaces. Then we have
\[
d_{B}(B_{k}(X), B_{k}(Y)) \leq d_{GH}((X,d_{X}), (Y,d_{Y}))
\]
for all $k\geq 0$.
\end{proposition}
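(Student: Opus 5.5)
The plan is the standard route: convert the Gromov--Hausdorff bound into an interleaving of the two Vietoris--Rips persistence modules, then invoke the algebraic stability theorem for interleaved persistence modules. Throughout, note the paper's convention that $VR_\varepsilon$ uses the threshold $2\varepsilon$. This convention is exactly what makes the interleaving parameter equal to $d_{GH}$ rather than $2d_{GH}$.

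\textbf{Step 1 (correspondence).} Fix $\eta > d_{GH}((X,d_X),(Y,d_Y))$. Choose isometric embeddings $\theta_1,\theta_2$ into some $(Z,d_Z)$ with $d_H(\theta_1(X),\theta_2(Y))<\eta$. Let $R\subseteq X\times Y$ consist of all pairs $(x,y)$ with $d_Z(\theta_1 x,\theta_2 y)<\eta$. The Hausdorff condition makes $R$ a correspondence: every $x$ and every $y$ appears in some pair. The triangle inequality in $Z$ gives
\[
|d_X(x,x')-d_Y(y,y')|\le 2\eta \quad \text{for all } (x,y),(x',y')\in R.
\]
Pick maps $f\colon X\to Y$ and $g\colon Y\to X$ with $(x,f(x))\in R$ and $(g(y),y)\in R$ for all $x\in X$, $y\in Y$.

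\textbf{Step 2 (simplicial maps and contiguity).} Suppose $d_X(x_i,x_j)\le 2\varepsilon$. Then $d_Y(f x_i,f x_j)\le 2\varepsilon+2\eta=2(\varepsilon+\eta)$. Hence $f$ extends to a simplicial map $VR_\varepsilon(X)\to VR_{\varepsilon+\eta}(Y)$ for every $\varepsilon$, and likewise $g$ extends to $VR_\varepsilon(Y)\to VR_{\varepsilon+\eta}(X)$.

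Next I compare $g\circ f$ with the inclusion $VR_\varepsilon(X)\hookrightarrow VR_{\varepsilon+2\eta}(X)$. Since $(x_i,f x_i)$ and $(g f x_j, f x_j)$ both lie in $R$, for a simplex $\sigma=\{x_0,\dots,x_k\}$ of $VR_\varepsilon(X)$ we get
\[
d_X(x_i, g f x_j)\le d_Y(f x_i, f x_j)+2\eta\le 2\varepsilon+4\eta .
\]
Distances within $\sigma$ and within $gf(\sigma)$ are bounded similarly. Hence $\sigma\cup gf(\sigma)$ is a simplex of $VR_{\varepsilon+2\eta}(X)$, so the two maps are contiguous and induce the same map on homology. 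The same argument handles $f\circ g$. It also shows that $f$ commutes up to contiguity with the inclusions $VR_\varepsilon\hookrightarrow VR_{\varepsilon'}$, since $f$ is literally compatible with those inclusions.

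\textbf{Step 3 (interleaving and algebraic stability).} Applying $PH_k(-;\mathbb{F})$ now yields an $\eta$-interleaving between the persistence modules $\{PH_k(VR_\varepsilon(X))\}_{\varepsilon}$ and $\{PH_k(VR_\varepsilon(Y))\}_{\varepsilon}$, indexed over all real $\varepsilon\ge 0$. Both modules are pointwise finite dimensional and have finitely many critical values, because $X$ and $Y$ are finite. They therefore decompose into the barcodes $B_k(X)$ and $B_k(Y)$. The algebraic stability (isometry) theorem of Chazal--Cohen-Steiner--Glisse--Guibas--Oudot then gives $d_B(B_k(X),B_k(Y))\le\eta$. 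Letting $\eta\downarrow d_{GH}$ proves the claim.

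\textbf{Main obstacle.} I expect the main care to be needed in Step 3, not in the elementary inequalities. First, the interleaving diagrams must commute exactly on homology, which is where contiguity is essential. Second, the module indexed by the critical values $\varepsilon_1<\dots<\varepsilon_m$ must be identified with the real-indexed module. Once that is in place, the rest is citing the algebraic stability theorem.
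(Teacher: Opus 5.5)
Your proof is correct. The distortion bound from the Hausdorff condition, the simplicial maps $VR_\varepsilon(X)\to VR_{\varepsilon+\eta}(Y)$, and the check that $\sigma\cup gf(\sigma)$ has all pairwise distances at most $2\varepsilon+4\eta$ are all sound, so the resulting $\eta$-interleaving holds. You also correctly observe that the paper's $2\varepsilon$ convention is what turns the familiar bound $2\,d_{GH}$ into $d_{GH}$.

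The paper gives no argument of its own and only cites \cite{CC-SGMO09, C-SEH07}. Your route (vertex maps chosen from a correspondence, contiguity, algebraic stability for interleaved modules) is the one later systematized by Chazal, de Silva and Oudot. The citation of \cite{C-SEH07}, a stability theorem for two functions on a single space, suggests the other classical finite-case argument:
\begin{itemize}
\item Use the correspondence $R$ itself as a common vertex set.
\item Filter the full simplex on $R$ by $\sigma\mapsto\frac12\operatorname{diam}_X(\pi_X\sigma)$ and by $\sigma\mapsto\frac12\operatorname{diam}_Y(\pi_Y\sigma)$.
\item Note that these two monotone functions differ by at most $\eta$ in sup norm, and apply sup-norm stability directly.
\end{itemize}
The only topological input there is that the Rips filtration of the pseudometric space $(R,\pi_X^{*}d_X)$ has the same persistent homology as that of $X$. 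This holds because vertices at pseudo-distance $0$ can be identified, again by a contiguity argument, this time using a section of $\pi_X$.

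That route avoids interleavings and the algebraic stability theorem, but it relies on the common complex being finite. Yours needs the full algebraic stability theorem. In exchange, it carries over essentially verbatim to infinite, totally bounded metric spaces once $q$-tameness is known.
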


\vspace{0.5cm}
\section{The Improved Characteristic Lattice Algorithm}\label{RCLA}

In this section, we improve the data reduction algorithm for TDA, called a {\it Characteristic Lattice Algorithm}, abbreviated as CLA, which was introduced in \cite{COPYY}.

\subsection{\bf{CLA and its improvement}}\label{CLA_RCLA}

Characteristic Lattice Algorithm is one of preprocessing algorithms for TDA to maintain topological features and decrease the computation time. 

The CLA process is designed to organize hypercubes according to a specified lattice size $\delta >0$. 
In the case that each hypercube contains elements in a given point cloud $X$ in $\mathbb{R}^{m}$, the process selects a sample point within the hypercube and replaces all existing points in $X$ within the hypercube with the selected sample point. 


The algorithm ensures that the characteristics of the given data can be analyzed with reduced computation time while preserving the topological features. 
Although this is an appropriate preprocessing method, it would be preferable to compute persistent homology in a more efficient manner by removing noisy data. 

Let $k$ be a positive integer. 
For the resulting data $X_{\delta}^{*}$ obtained from the data $X$ by applying CLA with $\delta$, 
if a hypercube contains fewer than or equal to $k$ elements in $X$, then a sample point will not be selected from that hypercube. 
Consequently, we can derive the new dataset, denoted as $X_{\delta, k}^{*}$, which is obtained by subtracting the points within that hypercubes from $X_{\delta}^{*}$.

This data reduction technique, which is an upgrade to the original CLA with additional processing, is called the {\it Refined Characteristic Lattice Algorithm}, abbreviated as RCLA, {\it with $\delta$ and $k$}.

\begin{figure}[h!]
  \centering
  \includegraphics[width = 12cm]{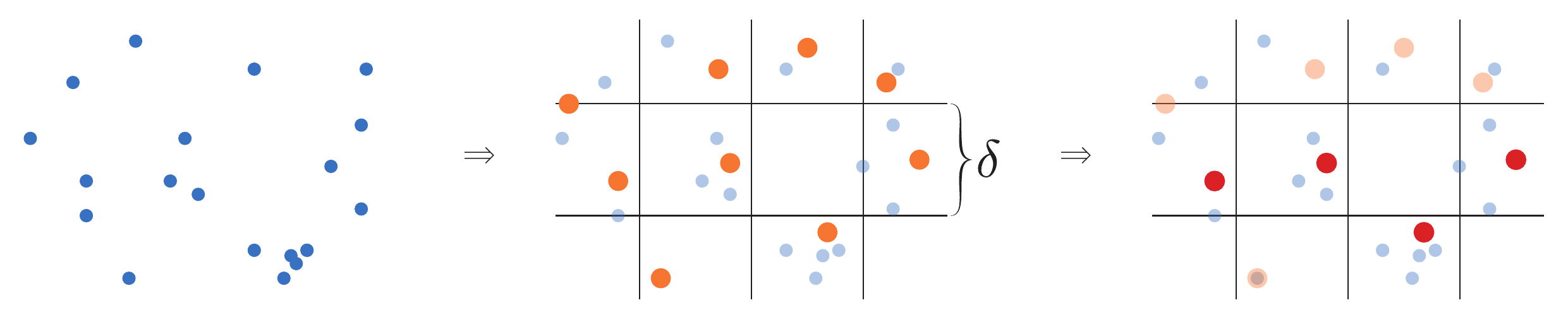}
  \caption{{\bf The steps of the CLA and RCLA algorithms}}\label{ExaRCLA1}
\end{figure}

\begin{remark}
In the construction of $X_{\delta,k}^*$, instead of selecting an arbitrary representative point from each selected cube, we may take the cube center. 
For every $m$-cube $C$ of side length $\delta$ with $|C\cap X|\ge k$, let $c(C)$ be the center of $C$. 
Define the resulting center-based set
\[
X_{\delta,k}^{c}:=\{\,c(C)\mid C \text{ is an $m$-cube of side } \delta,\ |C\cap X|\ge k\,\}.
\]
If $C=\prod_{i=1}^m [j_i\delta,(j_i+1)\delta)$, then
$c(C)=\bigl((j_1+\tfrac12)\delta,\dots,(j_m+\tfrac12)\delta\bigr).$
\end{remark}

The performance of CLA in the previous work has already been compared with several other preprocessing methods in terms of computational speed and accuracy (see \cite{COPYY} for more details). 
RCLA addresses the issue of noise in the data, thereby reducing the computation of persistent homology and enhancing the analysis of the phase space of the data. 
Therefore, for this comparative analysis, we will evaluate the efficacy of RCLA approach in comparison to the original CLA.
When implementing CLA in previous studies, all sample points were selected as the center of the hypercube for convenience. 
Similarly, the RCLA also use the center point for sample point selection.

\begin{figure}[h!]
  \centering
  \includegraphics[width = 3cm]{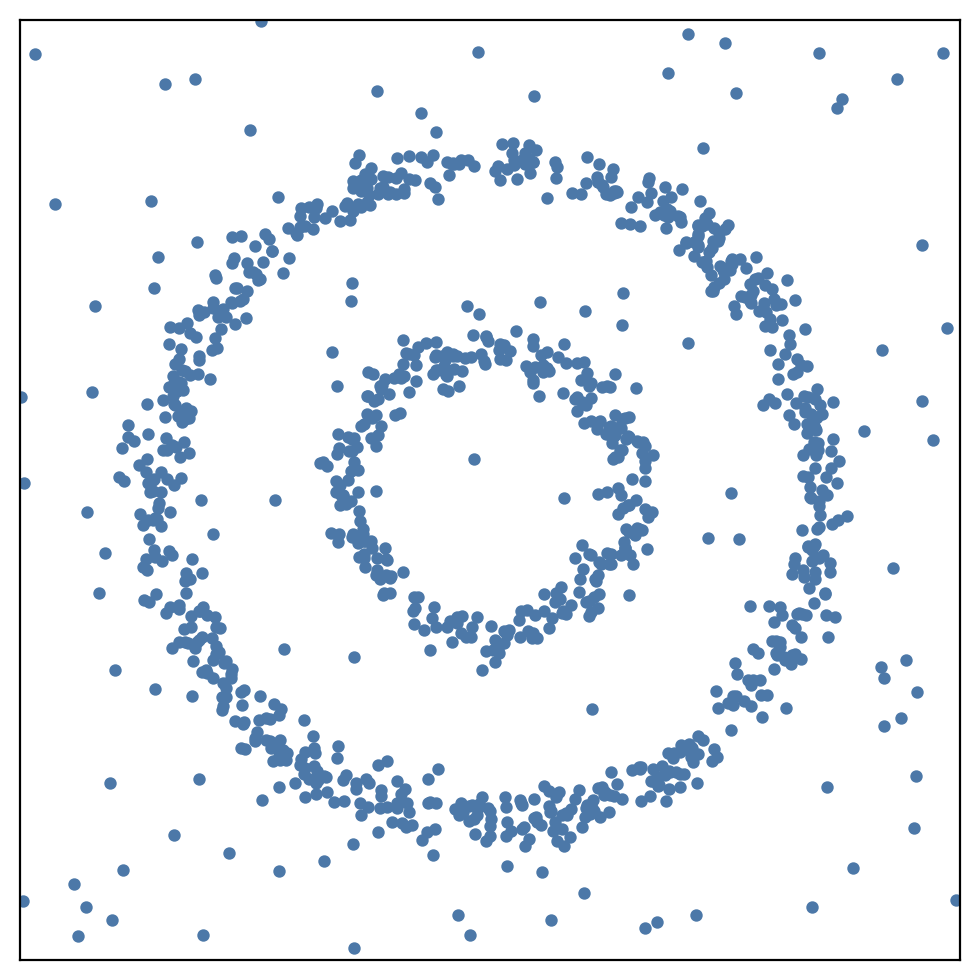}
  \includegraphics[width = 3cm]{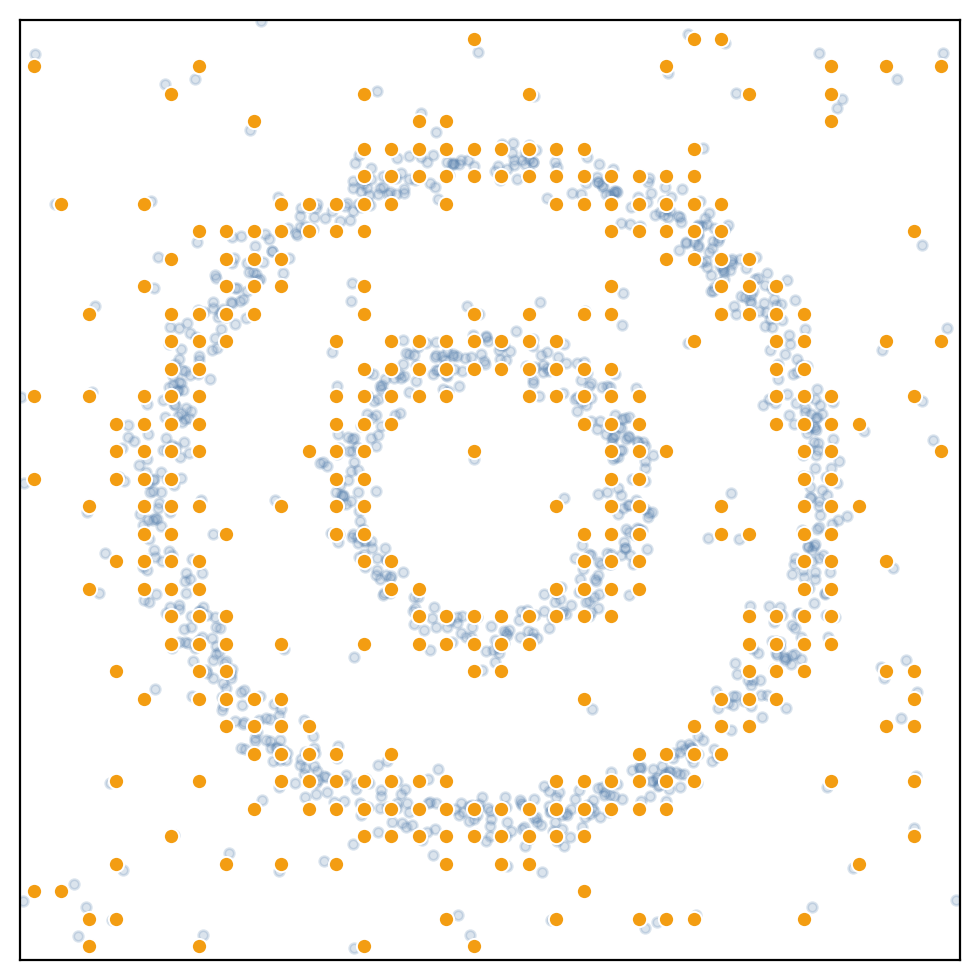}
  \includegraphics[width = 3cm]{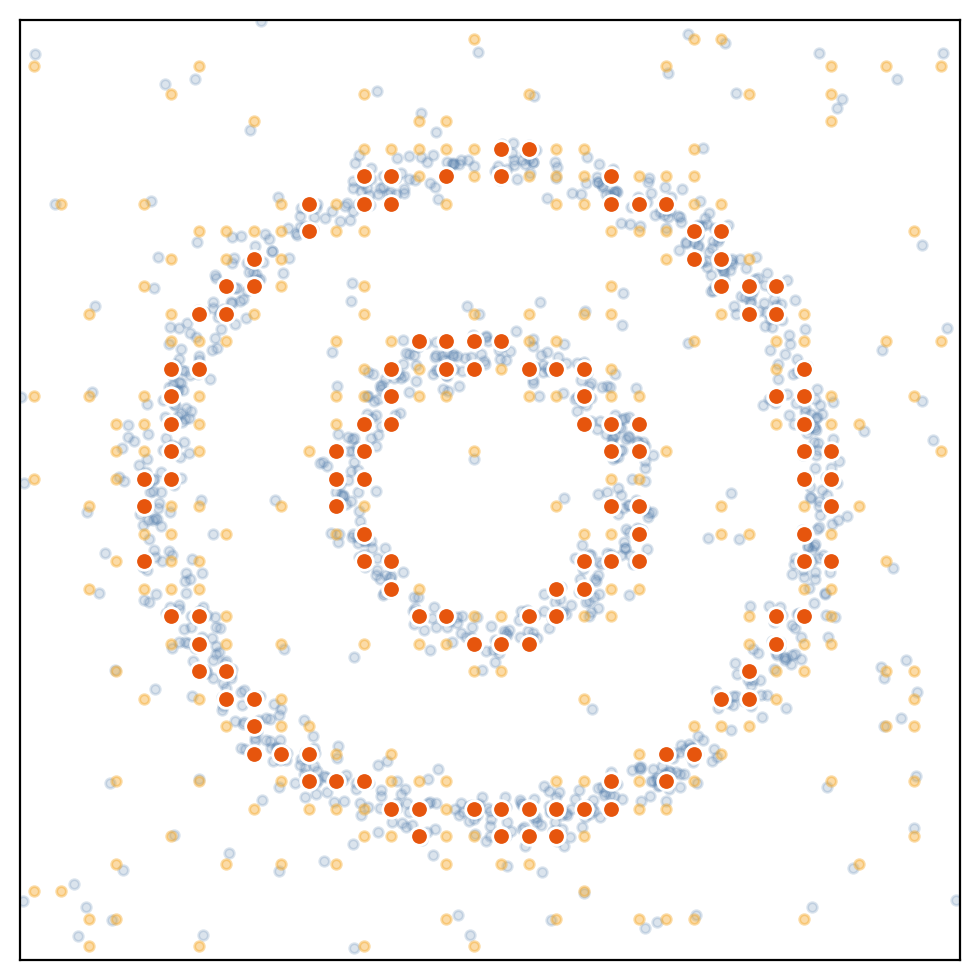}
  \caption{{\bf Datasets preprocessed by CLA and RCLA}}\label{ExaCLA1}
\end{figure}


\subsection{Stability theorem for RCLA}\label{StabilityRCLA}

Let $\mathcal{D}\subset\mathbb{R}^m$ be a bounded region (e.g., an $m$-dimensional rectangle).
We consider a point cloud $X$ in $\mathcal{D}$ consisting of shape data $X_{\mathrm{shape}}$ and noise data $X_{\mathrm{noise}}$, i.e., 
$$X=X_{\mathrm{shape}}\cup X_{\mathrm{noise}}\subset \mathcal{D}.$$
Assume that $X_{\mathrm{shape}}$ represents the underlying shape and that $X_{\mathrm{noise}}$ follows a Homogeneous Poisson Point Process (HPPP) with intensity $\lambda > 0$, independent of $X_{\mathrm{shape}}$.
The HPPP is a standard model for generating background noise uniformly over an ambient domain.
For more details on HPPP, see \cite{CSKM13, T03}. 

The purpose of applying RCLA to the observed point cloud $X$ is to obtain a reduced set that preserves the topological features of the underlying shape component $X_{\mathrm{shape}}$. 
In particular, for suitable choices of the parameters (e.g., the grid scale $\delta$ and the threshold $k$), we aim to show that the persistence diagram associated with the RCLA output is close to that of $X_{\mathrm{shape}}$ with respect to a standard metric on persistence diagrams (such as the bottleneck distance).

We fix a lattice scale $\delta>0$ and partition the ambient domain $\mathcal{D}\subset\mathbb{R}^m$ into (half-open) $m$-cubes of side length $\delta$. 
Let $\mathcal{C}=\{C_i\}_{i=1}^M$ denote the collection, so that
\begin{itemize}
    \item $C_i \cap C_j = \varnothing$ for $i \ne j$, 
    \item $\mathop{\cup}\limits_{i=1}^{M} C_i = \mathcal{D}$,
    \item $\mathrm{vol}(C_i) = \delta^m$ for each $i$.
\end{itemize}

The following proposition quantifies the discrepancy between these two choices in terms of the bottleneck distance.

\begin{theorem}
For a point cloud $X$, let $X^{\ast}_{\delta,k}$ be the set obtained by applying RCLA with $\delta$ and $k$. 
Let $X^{c}_{\delta,k}$ denote the corresponding center-based set obtained by selecting cube centers as representatives.
Then
\[
d_{B}(B_{n}(X^{\ast}_{\delta,k}), B_{n}(X^{c}_{\delta,k})) \leq \tfrac{\sqrt{m}\delta}{2}.
\]
\end{theorem}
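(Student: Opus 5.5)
We will derive the estimate directly from Proposition~\ref{PropBD}, so the task reduces to bounding $d_{GH}$ between the two finite metric spaces $X^{\ast}_{\delta,k}$ and $X^{c}_{\delta,k}$, both with the Euclidean metric of $\mathbb{R}^m$.

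\emph{Step 1: a natural bijection.} The same family of cubes is selected in both constructions: the cubes $C\in\mathcal{C}$ whose count $|C\cap X|$ meets the threshold. Fix such a cube. The set $X^{\ast}_{\delta,k}$ contains exactly one representative $x_C\in C\cap X$ from it, and $X^{c}_{\delta,k}$ contains exactly the center $c(C)$. The cubes are disjoint, so the assignment $x_C\mapsto c(C)$ is a bijection between the two sets. We use the same threshold convention in both sets; the paper is slightly inconsistent between ``$\le k$ removed'' and ``$\ge k$ kept'', but only equality of the selected families matters.

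\emph{Step 2: pointwise estimate.} Each cube $C$ is a half-open $m$-cube of side $\delta$. Every point of $C$ lies within Euclidean distance $\tfrac{\sqrt{m}\,\delta}{2}$ of $c(C)$, since each coordinate differs by at most $\delta/2$. Therefore $\|x_C-c(C)\|\le \tfrac{\sqrt{m}\,\delta}{2}$ for every selected cube.

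\emph{Step 3: Hausdorff to Gromov--Hausdorff.} Take $Z=\mathbb{R}^m$ and let $\theta_1,\theta_2$ be the inclusions, which are isometric embeddings. By Step 2, every point of $X^{\ast}_{\delta,k}$ lies within $\tfrac{\sqrt{m}\,\delta}{2}$ of a point of $X^{c}_{\delta,k}$, and conversely. Hence
\[
d_{GH}\bigl(X^{\ast}_{\delta,k},X^{c}_{\delta,k}\bigr)\le d_H\bigl(X^{\ast}_{\delta,k},X^{c}_{\delta,k}\bigr)\le \tfrac{\sqrt{m}\,\delta}{2}.
\]
If the definition of $d_H$ uses open neighborhoods, we pass to the infimum with $\varepsilon$ slightly larger. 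Proposition~\ref{PropBD} then gives the bound on $d_B(B_n(\cdot),B_n(\cdot))$ for every $n$.

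The only delicate point is a scale convention. The paper's Vietoris--Rips complex uses the $2\varepsilon$ convention, which fixes the constant in Proposition~\ref{PropBD}. We should check that $d_{GH}$, rather than $2d_{GH}$, is the correct bound under this convention. If the factor were different, a cleaner route would be the bijection-based interleaving. Under the map $x_C\mapsto c(C)$, pairwise distances change by at most $\sqrt{m}\,\delta$ by the triangle inequality. With the $2\varepsilon$ convention, this means $VR_\varepsilon(X^{\ast}_{\delta,k})$ maps simplicially into $VR_{\varepsilon+\sqrt{m}\delta/2}(X^{c}_{\delta,k})$, and symmetrically. The algebraic stability theorem then yields an $\tfrac{\sqrt{m}\,\delta}{2}$-interleaving, and hence the same bottleneck bound. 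We therefore expect the stated constant to hold either way.

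The degenerate case in which no cube is selected makes both sets empty, and the inequality is trivial there.
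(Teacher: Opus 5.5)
Your proposal is correct and follows the paper's proof. Both arguments bound $d_H\bigl(X^{\ast}_{\delta,k},X^{c}_{\delta,k}\bigr)\le \tfrac{\sqrt{m}\delta}{2}$ via the half-diagonal of a cube, pass to $d_{GH}$ using the inclusions into $\mathbb{R}^m$, and conclude with Proposition~\ref{PropBD}. Your scale-convention worry resolves in your favor, since under the paper's $2\varepsilon$ Vietoris--Rips convention the stability bound is $d_B\le d_{GH}$ rather than $2d_{GH}$, and your interleaving argument is a valid self-contained alternative giving the same constant.
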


\begin{proof}

When $\varepsilon = \frac{\sqrt{m}\delta}{2}$, it is obvious that $$X^{\ast}_{\delta, k} \subseteq (X^{c}_{\delta, k})_{\varepsilon} \quad \text{and} \quad X^{c}_{\delta, k}\subseteq(X^{\ast}_{\delta, k})_{\varepsilon}.$$
Then we have $d_H\bigl(X^{\ast}_{\delta,k},X^{c}_{\delta,k}\bigr)\le \tfrac{\sqrt{m}\delta}{2}.$
It follows that
\[
d_{GH}\bigl(X^{\ast}_{\delta,k},X^{c}_{\delta,k}\bigr)\le \tfrac{\sqrt{m}\delta}{2}
\]
by the definition of Gromov-Hausdorff distance.
Therefore, by Proposition \ref{PropBD}, 
\[
d_B(B_{n}(X^{\ast}_{\delta,k}), B_{n}(X^{c}_{\delta,k})) \leq \tfrac{\sqrt{m}\delta}{2}.
\]
\end{proof}

For each $m$-cube $C$, define
\[
N_{\mathrm{shape}}(C):=|C\cap X_{\mathrm{shape}}|,\qquad
N_{\mathrm{noise}}(C):=|C\cap X_{\mathrm{noise}}|.
\]
The Poisson assumption implies
\[
N_{\mathrm{noise}}(C)\sim \mathrm{Pois}(\mu)
\]
where $\mu=\lambda\,\delta^{m}$.
Note that $\{N_{\mathrm{noise}}(C_i)\}_{i=1}^M$ of $\mathcal{D}$ are independent.
The following proposition follows directly from equation~(2.11) in \cite{CSKM13}.

\begin{proposition}\label{prop:r}
  The probability that an $m$-cube $C$ contains at most $r$ points of $X_{\mathrm{noise}}$ is given by
  \[ 
  \mathbf{P}(N_{\mathrm{noise}}(C) \le r) = F_{\mathrm{Pois}(\mu)}(r) = e^{-\mu}\sum_{j=0}^{r} \frac{\mu^j}{j!},
  \] 
  where $F_{\mathrm{Pois}(\mu)}$ is the cumulative distribution function of the Poisson distribution with intensity $\mu$.  
\end{proposition}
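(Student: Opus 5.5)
The plan is to derive the formula from the defining property of the homogeneous Poisson point process: the count in a bounded Borel set $A$ is Poisson with mean $\lambda\,\mathrm{vol}(A)$. We then evaluate the resulting distribution function at $r$.

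\textbf{Step 1: the count is Poisson with mean $\mu$.} By the HPPP assumption (equation (2.11) in \cite{CSKM13}), for any bounded Borel set $A\subset\mathcal{D}$,
\[
\mathbf{P}\bigl(|A\cap X_{\mathrm{noise}}|=j\bigr)=e^{-\lambda\,\mathrm{vol}(A)}\frac{(\lambda\,\mathrm{vol}(A))^j}{j!}.
\]
Take $A=C$, one of the half-open cubes $C_i\in\mathcal{C}$. Since $\mathrm{vol}(C)=\delta^m$, we get $N_{\mathrm{noise}}(C)\sim\mathrm{Pois}(\mu)$ with $\mu=\lambda\delta^m$. Homogeneity is what makes $\mu$ independent of the position of $C$. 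The half-open convention ensures the $C_i$ are disjoint and measurable, so each noise point lies in exactly one cube.

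\textbf{Step 2: sum the point masses.} Since $N_{\mathrm{noise}}(C)$ takes values in the nonnegative integers, the event $\{N_{\mathrm{noise}}(C)\le r\}$ is the disjoint union of the events $\{N_{\mathrm{noise}}(C)=j\}$ for $j=0,\dots,r$. By additivity,
\[
\mathbf{P}\bigl(N_{\mathrm{noise}}(C)\le r\bigr)=\sum_{j=0}^{r}e^{-\mu}\frac{\mu^j}{j!}.
\]
Factoring out $e^{-\mu}$ gives the stated expression, and the right-hand side is by definition $F_{\mathrm{Pois}(\mu)}(r)$.

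\textbf{Main point of care.} Neither step is difficult. The only things to check are that the cube lies inside $\mathcal{D}$, so that the process restricted to $\mathcal{D}$ assigns it full volume $\delta^m$, and that $X_{\mathrm{shape}}$ plays no role. The latter holds because the statement concerns only $X_{\mathrm{noise}}$, which is independent of the shape data.
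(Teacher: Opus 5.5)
Your proposal is correct and takes essentially the same approach as the paper, which simply cites the HPPP count property (equation (2.11) of \cite{CSKM13}) to get $N_{\mathrm{noise}}(C)\sim\mathrm{Pois}(\lambda\delta^m)$. Summing the point masses for $j=0,\dots,r$, as you do, then yields the stated distribution function.
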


We classify $m$-cubes according to their shape and noise points as follows. 

\begin{definition}
Let $C$ be an $m$-cube. We call $C$
\begin{itemize}
    \item a \emph{noise $m$-cube} if $N_{\mathrm{shape}}(C)=0$ and $N_{\mathrm{noise}}(C)\ge k$;
    \item a \emph{shape $m$-cube} if $N_{\mathrm{shape}}(C)>0$ and $|C\cap X|\ge k$;
    \item an \emph{out-shape $m$-cube} if $N_{\mathrm{shape}}(C)>0$ and $|C\cap X|< k$.
\end{itemize}
Then the sets of noise, shape, and out-shape $m$-cubes are denoted by $\mathcal{C}_{\mathrm{n}}$, $\mathcal{C}_{\mathrm{s}}$, and $\mathcal{C}_{\mathrm{o}}$, respectively.
\end{definition}

If $|C_i \cap X| \ge k$, then $C_i$ belongs to either $\mathcal{C}_{\mathrm{n}}$ or $\mathcal{C}_{\mathrm{s}}$.
If $N_{\text{shape}}(C_i) > 0$, then $C_i$ belongs to either $\mathcal{C}_{\mathrm{s}}$ or $\mathcal{C}_{\mathrm{o}}$.


\begin{lemma}\label{LemNoise}
The probability that there are no noise $m$-cubes is given by
\[
\mathbf{P}(\mathcal{C}_{\mathrm{n}} = \varnothing) = \left(F_{\mathrm{Pois}(\mu)}(k-1) \right)^{ |\{ C : N_{\mathrm{shape}}(C) = 0 \}|}.
\]
\end{lemma}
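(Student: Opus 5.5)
The plan is to rewrite the event $\{\mathcal{C}_{\mathrm{n}}=\varnothing\}$ as an intersection of independent events, one for each cube that contains no shape points. By definition, a cube $C$ is a noise $m$-cube exactly when $N_{\mathrm{shape}}(C)=0$ and $N_{\mathrm{noise}}(C)\ge k$. Let $\mathcal{E}:=\{C\in\mathcal{C} : N_{\mathrm{shape}}(C)=0\}$. Since $X_{\mathrm{shape}}$ is independent of the noise, we treat it as fixed (that is, we condition on it), so $\mathcal{E}$ is a deterministic finite set. Then
\[
\{\mathcal{C}_{\mathrm{n}}=\varnothing\}=\bigcap_{C\in\mathcal{E}}\{N_{\mathrm{noise}}(C)\le k-1\}.
\]
Cubes with $N_{\mathrm{shape}}(C)>0$ impose no condition, because they can never be noise cubes.

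Next we use the HPPP structure. The cubes $C_i$ are pairwise disjoint Borel sets, each of volume $\delta^m$. By the independent-scattering property of the Poisson process, already noted in the text, the counts $N_{\mathrm{noise}}(C_i)$ are independent, and each is distributed as $\mathrm{Pois}(\mu)$ with $\mu=\lambda\delta^m$. The probability of the intersection therefore factors as a product over $C\in\mathcal{E}$. By Proposition~\ref{prop:r} with $r=k-1$, each factor equals $F_{\mathrm{Pois}(\mu)}(k-1)$. The product is thus $\bigl(F_{\mathrm{Pois}(\mu)}(k-1)\bigr)^{|\mathcal{E}|}$, which is the claimed formula.

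The only real subtlety is justifying that $|\mathcal{E}|$ can be treated as a constant. This requires the independence of $X_{\mathrm{shape}}$ and $X_{\mathrm{noise}}$, so that conditioning on the shape does not change the law of the noise. It also requires that every cube in the partition have full volume $\delta^m$, which holds under the standing assumption that $\mathcal{D}$ is tiled exactly by the cubes. I would state both points explicitly. The case $k=0$ is trivial, and for $k\ge 1$ the argument above applies verbatim.
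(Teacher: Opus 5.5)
Your proposal is correct and follows the paper's proof essentially step for step. You write $\{\mathcal{C}_{\mathrm{n}}=\varnothing\}$ as the intersection over cubes with $N_{\mathrm{shape}}(C)=0$ of the events $\{N_{\mathrm{noise}}(C)\le k-1\}$, factor the probability using independence of the HPPP counts on disjoint cubes, and evaluate each factor as $F_{\mathrm{Pois}(\mu)}(k-1)$ via Proposition~\ref{prop:r}, exactly as the paper does. Your explicit remarks that $X_{\mathrm{shape}}$ is treated as fixed and that every cube has full volume $\delta^m$ just make precise what the paper leaves implicit.
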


\begin{proof}
Note that every $m$-cube $C\in \mathcal{C}$ with $N_{\mathrm{shape}}(C)>0$ is not a noise $m$-cube.
For an $m$-cube $C$ with $N_{\mathrm{shape}}(C)=0$, $C\notin \mathcal{C}_{\mathrm{n}}$ if and only if $N_{\mathrm{noise}}(C)\leq k-1$.
By Proposition~\ref{prop:r},
\[
\mathbf{P}(N_{\text{noise}}(C) \le k-1) = F_{\mathrm{Pois}(\mu)}(k-1).
\]
Since $X_{\mathrm{noise}}$ is generated by an HPPP, the counts $\{N_{\mathrm{noise}}(C_i)\}_{i=1}^M$ are independent.
Therefore, we have 
\begin{align*}
\mathbf{P}(\mathcal{C}_{\mathrm{n}} = \varnothing) 
&= \prod_{C:\,N_{\mathrm{shape}}(C)=0}
\mathbf{P}\bigl(N_{\mathrm{noise}}(C)\leq k-1\bigr) \\
&= \left(F_{\mathrm{Pois}(\mu)}(k-1) \right)^{ |\{ C : N_{\text{shape}}(C) = 0 \}| }.
\end{align*}

\end{proof}

\begin{lemma}\label{LemOutShape}
The probability that there are no out-shape $m$-cubes is given by
\[
\mathbf{P}(\mathcal{C}_{\mathrm{o}}=\varnothing)=\prod_{C: N_{\mathrm{shape}}(C)>0}\Big(1-F_{\mathrm{Pois}(\mu)}(\max\{ 0, k-N_{\mathrm{shape}}(C)\} - 1)\Big).
\]
\end{lemma}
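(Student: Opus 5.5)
The argument will mirror the proof of Lemma~\ref{LemNoise}, now working over the cubes that contain shape points. Throughout, $X_{\mathrm{shape}}$ is treated as fixed (deterministic). The only randomness is in $X_{\mathrm{noise}}$, which is independent of $X_{\mathrm{shape}}$. So for each cube $C$ the number $N_{\mathrm{shape}}(C)$ is a constant, and $N_{\mathrm{noise}}(C)\sim\mathrm{Pois}(\mu)$.

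First we reduce the event to a product over cubes. Cubes with $N_{\mathrm{shape}}(C)=0$ can never be out-shape, so $\mathcal{C}_{\mathrm{o}}=\varnothing$ holds exactly when every $C$ with $N_{\mathrm{shape}}(C)>0$ satisfies $|C\cap X|\ge k$. Assuming the shape and noise points are disjoint (almost surely true for a Poisson process against a finite set), we have $|C\cap X| = N_{\mathrm{shape}}(C)+N_{\mathrm{noise}}(C)$. Hence $C\notin\mathcal{C}_{\mathrm{o}}$ if and only if $N_{\mathrm{noise}}(C)\ge k-N_{\mathrm{shape}}(C)$. The cubes are disjoint, so the counts $\{N_{\mathrm{noise}}(C_i)\}$ are independent. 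Therefore $\mathbf{P}(\mathcal{C}_{\mathrm{o}}=\varnothing)$ factors as $\prod_{C:N_{\mathrm{shape}}(C)>0}\mathbf{P}\bigl(N_{\mathrm{noise}}(C)\ge k-N_{\mathrm{shape}}(C)\bigr)$.

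Next we evaluate each factor, writing $s=N_{\mathrm{shape}}(C)$ and splitting into two cases.
\begin{itemize}
\item If $s<k$, then $k-s\ge 1$. By Proposition~\ref{prop:r},
\[
\mathbf{P}(N_{\mathrm{noise}}\ge k-s)=1-\mathbf{P}(N_{\mathrm{noise}}\le k-s-1)=1-F_{\mathrm{Pois}(\mu)}(k-s-1),
\]
and here $k-s=\max\{0,k-s\}$.
\item If $s\ge k$, the event is certain, so the factor is $1$. In this case $\max\{0,k-s\}-1=-1$, and under the natural convention $F_{\mathrm{Pois}(\mu)}(-1)=\mathbf{P}(N\le -1)=0$ (an empty sum in Proposition~\ref{prop:r}), the formula again gives $1-0=1$.
\end{itemize}
Substituting the two cases into the product yields the stated identity.

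The only delicate points are these bookkeeping issues: treating $X_{\mathrm{shape}}$ as fixed, the almost-sure disjointness of shape and noise points, and the convention $F(-1)=0$. I would state these explicitly, since the probabilistic content is identical to Lemma~\ref{LemNoise}.
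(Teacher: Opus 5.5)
Your proposal is correct and follows essentially the same route as the paper. Like the paper, you restrict to cubes with $N_{\mathrm{shape}}(C)>0$, rewrite $|C\cap X|\ge k$ as $N_{\mathrm{noise}}(C)\ge \max\{0,k-N_{\mathrm{shape}}(C)\}$, evaluate each factor via Proposition~\ref{prop:r} with the convention $F_{\mathrm{Pois}(\mu)}(-1)=0$, and multiply using independence of the cube counts. Your explicit case split, and your remarks about treating $X_{\mathrm{shape}}$ as fixed and about the almost-sure disjointness of shape and noise points, simply spell out what the paper's proof uses implicitly.
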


\begin{proof}
Note that every $m$-cube $C\in \mathcal{C}$ with $N_{\mathrm{shape}}(C)=0$ is not an out-shape $m$-cube.
For an $m$-cube $C$ with $N_{\mathrm{shape}}(C)>0$, $C\notin \mathcal{C}_{\mathrm{o}}$ if and only if $\lvert C \cap X \rvert \geq k$.

Since $\lvert C \cap X \rvert = N_{\mathrm{shape}}(C) + N_{\mathrm{noise}}(C)$, then 
$N_{\mathrm{noise}}(C) \ge \max\{ 0, k-N_{\mathrm{shape}}(C) \}.$
Put $r(C)=\max\{ 0, k-N_{\mathrm{shape}}(C) \}$.
Therefore  
\[
\mathbf{P}\big(|C \cap X|\ge k\big) =\mathbf{P}\big(N_{\mathrm{noise}}(C)\ge r(C)\big) =1-
\mathbf{P}\big(N_{\mathrm{noise}}(C)\le r(C)-1\big).
\]
By Proposition~\ref{prop:r}, 
\[
\mathbf{P}\big(|C \cap X|\ge k\big) = 1-F_{\mathrm{Pois}(\mu)}(r(C)-1).
\]
Here we use the convention
$F_{\mathrm{Pois}(\mu)}(-1)=0,$
so that, in the case $r(C)=0$, the corresponding factor is equal to $1$.
Since $\{N_{\mathrm{noise}}(C_i)\}_{i=1}^M$ are independent,
\[
\mathbf{P}(\mathcal{C}_{\mathrm{o}}=\varnothing) =\prod_{C:N_{\mathrm{shape}}(C)>0} \Big(1-F_{\mathrm{Pois}(\mu)}\big(r(C)-1\big)\Big).
\]
\end{proof}

Proposition~\ref{prop:r} together with the two Lemmas~\ref{LemNoise} and \ref{LemOutShape} provides the probabilistic estimates needed under the HPPP model to establish the stability theorem.


Define
\[
\alpha := 1 - \left(F_{\mathrm{Pois}(\mu)}(k-1)\right)^{|\{C:N_{\mathrm{shape}}(C) = 0\}|},
\]
\[
\beta := 1 - \prod_{C: N_{\mathrm{shape}}(C) > 0}  \big( 1 - F_{\mathrm{Pois(\mu)}}(r_C - 1) \big),
\]
where $\mu = \lambda \delta^m$ and $r(C) = \max\{0, k - N_{\mathrm{shape}}(C)\}$.

\begin{theorem}
Let $X=X_{\mathrm{shape}}\cup X_{\mathrm{noise}}$ be a point cloud on the bounded domain $\mathcal{D}\subset\mathbb{R}^m$. 
Assume that $X_{\mathrm{noise}}$ follows an HPPP  with intensity $\lambda>0$.
Fix $\delta>0$ and $k\in\mathbb{N}$, and let $X^{\ast}_{\delta,k}$ be the result set of RCLA.
Then, with probability at least $1-(\alpha + \beta)$, the following holds for all $n\geq 0$: 
\[
d_B\bigl(B_n(X_{\mathrm{shape}}),\,B_n(X^{\ast}_{\delta,k})\bigr)\le \sqrt{m} \delta.
\]
\end{theorem}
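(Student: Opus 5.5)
The plan is to work on the event $E=\{\mathcal{C}_{\mathrm{n}}=\varnothing\}\cap\{\mathcal{C}_{\mathrm{o}}=\varnothing\}$, derive a deterministic Hausdorff bound there, and then pass to persistence diagrams. By Lemma~\ref{LemNoise} and Lemma~\ref{LemOutShape}, $\mathbf{P}(\mathcal{C}_{\mathrm{n}}\neq\varnothing)=\alpha$ and $\mathbf{P}(\mathcal{C}_{\mathrm{o}}\neq\varnothing)=\beta$. The union bound then gives $\mathbf{P}(E)\ge 1-(\alpha+\beta)$. Independence is not needed for this step, which is convenient: the noise counts are independent across cubes, but the two events involve disjoint families of cubes, and the union bound avoids the question altogether. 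It therefore suffices to show that on $E$ we have $d_B(B_n(X_{\mathrm{shape}}),B_n(X^{\ast}_{\delta,k}))\le\sqrt{m}\delta$ for every $n$.

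On $E$, every cube $C$ with $|C\cap X|\ge k$ is a shape cube, because noise cubes are excluded. Every cube containing a shape point also satisfies $|C\cap X|\ge k$, because out-shape cubes are excluded. Hence the set of retained cubes is exactly $\{C: N_{\mathrm{shape}}(C)>0\}$. $X^{\ast}_{\delta,k}$ consists of one representative point from each such cube, either a point of $X$ in the cube or, following the remark, its center; in both cases the point lies in the closure of the cube.

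Next I will show $d_H(X_{\mathrm{shape}},X^{\ast}_{\delta,k})\le\sqrt{m}\delta$. Take $x\in X_{\mathrm{shape}}$ lying in cube $C$. Then $C$ is retained, and its representative $y$ satisfies $\|x-y\|\le\operatorname{diam}(C)=\sqrt{m}\delta$. Conversely, take $y\in X^{\ast}_{\delta,k}$ representing a retained cube $C$. Since $N_{\mathrm{shape}}(C)>0$, there is a shape point $x\in C$, and again $\|x-y\|\le\sqrt{m}\delta$. The bound $\sqrt{m}\delta$ rather than $\sqrt{m}\delta/2$ is needed precisely because the representative is arbitrary. With center representatives one would get $\sqrt{m}\delta/2$.

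Finally, both sets are finite subsets of $\mathbb{R}^m$, so taking the identity embeddings into $Z=\mathbb{R}^m$ gives $d_{GH}\le d_H\le\sqrt{m}\delta$. Proposition~\ref{PropBD} then yields the bottleneck bound for all $n\ge0$ simultaneously. There is no real obstacle here. The only delicate points are that an empty $X_{\mathrm{shape}}$ makes the statement vacuous or requires a convention, and that for a center representative the point may lie on the boundary of a half-open cube but still within the closure. The main care is in the identification of the retained cubes on $E$.
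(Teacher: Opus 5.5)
Your proposal is correct and follows essentially the same route as the paper: a union bound on $\{\mathcal{C}_{\mathrm{n}}=\varnothing\}\cap\{\mathcal{C}_{\mathrm{o}}=\varnothing\}$, then a Hausdorff bound of $\sqrt{m}\delta$ from the cube diameter, then $d_B\le d_{GH}\le d_H$ via Proposition~\ref{PropBD}. Your explicit identification of the retained cubes as exactly those with $N_{\mathrm{shape}}(C)>0$ on this event spells out a step the paper states only briefly; as a minor aside, a cube center always lies in the interior of its cube, never on its boundary.
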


\begin{proof}
Let $E_{\mathrm{n}}:=\{\mathcal{C}_{\mathrm{n}}=\varnothing\}$ and 
$E_{\mathrm{o}}:=\{\mathcal{C}_{\mathrm{o}}=\varnothing\}$ be the events.
We bound the probabilities of these events separately.
\[
\mathbf{P}({E_{\mathrm{n}}})
=1-\mathbf{P}(\mathcal{C}_{\mathrm{n}}\neq\varnothing)= 1-\alpha
\quad \text{and} \quad 
\mathbf{P}({E_{\mathrm{o}}})
=1-\mathbf{P}(\mathcal{C}_{\mathrm{o}}\neq\varnothing)
= 1-\beta
\]
Therefore,
\[
\mathbf{P}(E_{\mathrm{n}}\cap E_{\mathrm{o}})
=1-\mathbf{P}(E_{\mathrm{n}}^c\cup E_{\mathrm{o}}^c)
\ge 1-\left(\mathbf{P}(E_{\mathrm{n}}^c)+\mathbf{P}(E_{\mathrm{o}}^c)\right)  
\ge 1-(\alpha+\beta).
\]
Applying this, we conclude that with probability at least $1-(\alpha+\beta)$, both $E_{\mathrm{n}}$ and $E_{\mathrm{o}}$ occur.
Hence, with probability at least $1-(\alpha+\beta)$, we have 
\[
\mathcal{C}_{\mathrm{n}}=\varnothing
\quad\text{and}\quad
\mathcal{C}_{\mathrm{o}}=\varnothing.
\]
Under this condition, for every point $x \in X_{\text{shape}}$, 
there is an $m$-cube $C$ obtained from $X^{\ast}_{\delta, k}$ such that $x\in C$.
When $\varepsilon = \sqrt{m} \delta$, it follows that
\[
X^{\ast}_{\delta, k} \subseteq (X_{\text{shape}})_{\varepsilon} 
\quad \text{and} \quad 
X_{\text{shape}} \subseteq (X^{\ast}_{\delta, k})_{\varepsilon}.
\]
Then $d_{H}(X_{\text{shape}}, X_{\delta,k}^{\ast}) \le \sqrt{ m }\delta$. 
By the definition of Gromov–Hausdorff Distance,
\[
d_{GH}(X_{\text{shape}}, X_{\delta,k}^{\ast}) \le \sqrt{ m }\delta.
\]
Hence, by Proposition~\ref{PropBD},
\[
d_B(B_{n}(X_{\text{shape}}), B_{n}(X_{\delta,k}^{\ast}))  \le \sqrt{ m }\delta.
\]
\end{proof}


\subsection{\bf{Implementation of RCLA}}\label{ImplementationRCLA}

By Section~\ref{StabilityRCLA}, the bottleneck distance between the persistence diagrams of the RCLA output and $X_{\mathrm{shape}}$ is bounded by $\sqrt{m}\,\delta$ for suitable $(\delta, k)$.
In practice, neither $X_{\mathrm{shape}}$ nor the noise intensity $\lambda$ is known, so $(\delta, k)$ must be selected from the data.
We describe a fully automatic procedure.

We generate candidate values of $\delta$ from the nearest-neighbor distance distribution of~$X$.
For each neighborhood order $q \in Q$ (default $Q = \{5, 8, 16\}$), we compute the $q$-th nearest-neighbor distance for every point in $X$ and pool all distances.
From the pooled distribution, we extract the quantile range $[q_{\mathrm{lo}}, q_{\mathrm{hi}}]$ (default $[0.01, 0.70]$) and generate $20$ candidate values on a logarithmic grid.

For each candidate $\delta$, we estimate a conservative upper bound on the noise intensity per cell.
Let $M$ denote the number of $\delta$-grid cells and $Z_0$ the number of empty cells.
Under the HPPP noise model, the probability that a pure-noise cell is empty is $p_0 = e^{-\mu}$, where $\mu = \lambda\,\delta^m$.
We estimate $p_0$ using the Jeffreys prior $\mathrm{Beta}(\tfrac{1}{2}, \tfrac{1}{2})$, a standard noninformative prior for binomial proportions~\cite{GCS13, Jef61}:
\begin{equation}\label{eq:posterior}
p_0 \mid Z_0 \sim \mathrm{Beta}\!\left(Z_0 + \tfrac{1}{2},\; M - Z_0 + \tfrac{1}{2}\right).
\end{equation}
Taking the $\gamma$-quantile $p_L$ (default $\gamma = 0.05$) gives a $(1-\gamma)$ credible lower bound on $p_0$, and since $-\ln$ is monotone decreasing, an upper bound on~$\mu$:
\begin{equation}\label{eq:mu_upper}
\mu_U(\delta) = -\ln p_L.
\end{equation}

Given $\mu_U(\delta)$, we choose the minimum $k$ such that the expected number of noise-only cells surviving the threshold is bounded:
\begin{equation}\label{eq:fp_budget}
M \cdot \mathbf{P}\!\left(\mathrm{Pois}(\mu_U) \ge k\right) \;\le\; \alpha_{\mathrm{fp}},
\end{equation}
where $\alpha_{\mathrm{fp}}$ (default $1.0$) bounds the per-family error rate~\cite{dunn1961multiple, lehmann2005generalizations}; a smaller $\alpha_{\mathrm{fp}}$ tightens the stability bound of Section~\ref{StabilityRCLA}.

For each candidate $\delta$, we run RCLA with the corresponding $k(\delta)$ to produce representative points $X^{c}_{\delta,k}$.
Candidates yielding fewer than $n_{\min}$ representative points (default $n_{\min} = 50$) are discarded.
Among the remaining candidates, we select the $\delta$ that minimizes
\begin{equation}\label{eq:quality_J}
J(\delta) = \sigma(d_{\mathrm{NN}}) \cdot \mu(d_{\mathrm{NN}}) + \eta \cdot (\beta_0 - 1),
\end{equation}
where $\mu(d_{\mathrm{NN}})$ and $\sigma(d_{\mathrm{NN}})$ are the mean and standard deviation of the 1-nearest-neighbor distances among the representative points, and $\beta_0$ is the number of connected components in a radius graph with radius $r = c_r \cdot \delta$ (default $c_r = 1.5$, $\eta = 1.0$).
The first term favors dense and uniformly spaced representatives; the second penalizes fragmentation.

\begin{remark}
The objective $J(\delta)$ is heuristic and is expected to exhibit a clear minimum: too fine a $\delta$ causes fragmentation, while too coarse a $\delta$ yields overly large spacing.
A potential failure mode arises with highly non-uniform density; in such cases, adjusting $n_{\min}$ can prevent the selection of an overly aggressive $\delta$ that discards meaningful structure.
\end{remark}

The complete procedure is summarized in Algorithm~\ref{alg:auto-select}.

\begin{algorithm}[htbp]
\caption{Automatic $(\delta, k)$ Selection}\label{alg:auto-select}
\KwData{Point cloud $X \in \Real^{n \times m}$; parameters $\alpha_{\mathrm{fp}},\, \eta,\, n_{\min}$}
\KwResult{Optimal parameters $(\delta^*, k^*)$}
$\Delta \leftarrow$ NN quantile-based $\delta$ candidates (log-space grid, $20$ values)\;
$J_{\mathrm{best}} \leftarrow \infty$\;
\ForEach{$\delta \in \Delta$}{
  $M \leftarrow$ total number of grid cells\;
  $Z_0 \leftarrow$ number of empty cells\;
  $\mu_U \leftarrow -\ln\!\left(\mathrm{Beta}(Z_0 + \tfrac{1}{2},\, M - Z_0 + \tfrac{1}{2}).\mathrm{quantile}(\gamma)\right)$\;
  $k \leftarrow \min\!\set{k : M \cdot \mathbf{P}(\mathrm{Pois}(\mu_U) \ge k) \le \alpha_{\mathrm{fp}}}$\;
  $C \leftarrow \mathrm{RCLA}(X, \delta, k)$\;
  \If{$|C| < n_{\min}$}{
    \textbf{continue}\;
  }
  $J(\delta) \leftarrow \sigma(d_{\mathrm{NN}}(C)) \cdot \mu(d_{\mathrm{NN}}(C)) + \eta \cdot (\beta_0(C) - 1)$\;
  \If{$J(\delta) < J_{\mathrm{best}}$}{
    $J_{\mathrm{best}} \leftarrow J(\delta)$\;
    $(\delta^*, k^*) \leftarrow (\delta, k)$\;
  }
}
\Return{$(\delta^*, k^*)$}\;
\end{algorithm}

Table~\ref{tab:hyperparams} lists the parameters that may be adjusted by the user.
All experiments in this paper use the defaults listed below without tuning.

\begin{table}[htbp]
\centering
\begin{tabular}{llcl}
\toprule
Parameter & Symbol & Default & Role \\
\midrule
False positive budget & $\alpha_{\mathrm{fp}}$ & $1.0$ & Max expected noise cells retained \\
Connectivity penalty & $\eta$ & $1.0$ & Weight on component count in~\eqref{eq:quality_J} \\
Radius factor & $c_r$ & $1.5$ & Radius $r = c_r \cdot \delta$ for $\beta_0$ \\
Minimum centers & $n_{\min}$ & $50$ & Discard if fewer representatives \\
\bottomrule
\end{tabular}
\caption{User-adjustable parameters for the automatic $(\delta, k)$ selection.}
\label{tab:hyperparams}
\end{table}

Algorithm~\ref{alg:auto-select} is not the computational bottleneck of the pipeline: it involves only sorting and nearest-neighbor queries on the input, repeated over $|\Delta|=20$ candidates.
The persistence computation on the compressed output is the dominant cost in practice.


\vspace{0.5cm}
\section{Experiments}\label{Experiments}


This section presents numerical results on a synthetic dataset for evaluating CLA and RCLA. 
In addition, we compare RCLA with several representative denoising methods to assess its ability to suppress background noise.

\subsection{\bf{Data}}\label{Data}

We consider a synthetic point cloud $X\subset \mathbb{R}^m$ of the form
\[
X = X_{\mathrm{shape}} \cup X_{\mathrm{noise}},
\]
where $X_{\mathrm{shape}}$ is a point sample from the underlying shape and $X_{\mathrm{noise}}$ consists of background noise points.

  

\begin{enumerate}
    \item
The first point cloud is generated for comparison of RCLA and CLA in Section~\ref{ComCLA}.
The shape component $X_{\mathrm{shape}}$ consists of points sampled from a circle embedded in $[0,1]^2$, while the noise component $X_{\mathrm{noise}}$ is generated from an HPPP in the same ambient region to model background noise. 
We fix $|X_{\mathrm{shape}}|=1000$ and set $|X_{\mathrm{noise}}| = r\,|X_{\mathrm{shape}}|$ where $r>0$ denotes the noise ratio.
Thus, the resulting dataset is $X = X_{\mathrm{shape}} \cup X_{\mathrm{noise}}$, as illustrated in the left data of Figure~\ref{ExaData} with $r = 0.1$.

    \item
The second point cloud is constructed for the experiments in Section~\ref{ComDenoising}, where RCLA is compared with established denoising methods.
We take $X_{\mathrm{shape}}$ as a point sample from a two-circle configuration in $[0,1]^2$, and we generate $X_{\mathrm{noise}}$ from a HPPP in the same ambient region. 
We fix $|X_{\mathrm{shape}}|=1000$, allocating $85\%$ of the points to the larger circle and the remaining $15\%$ to the smaller circle.
We set $|X_{\mathrm{noise}}| = r\,|X_{\mathrm{shape}}|$, where $r>0$ denotes the noise ratio. 
Then we obtain $X= X_{\mathrm{shape}} \cup X_{\mathrm{noise}}$.
The right data of Figure~\ref{ExaData} depicts the point cloud with $r=0.1$.
\end{enumerate}

\begin{figure}[h!]
  \renewcommand{\thesubfigure}{\arabic{subfigure}}
  \centering
  \begin{subfigure}[b]{0.4\textwidth}
    \centering
    \includegraphics[width = 3.5cm]{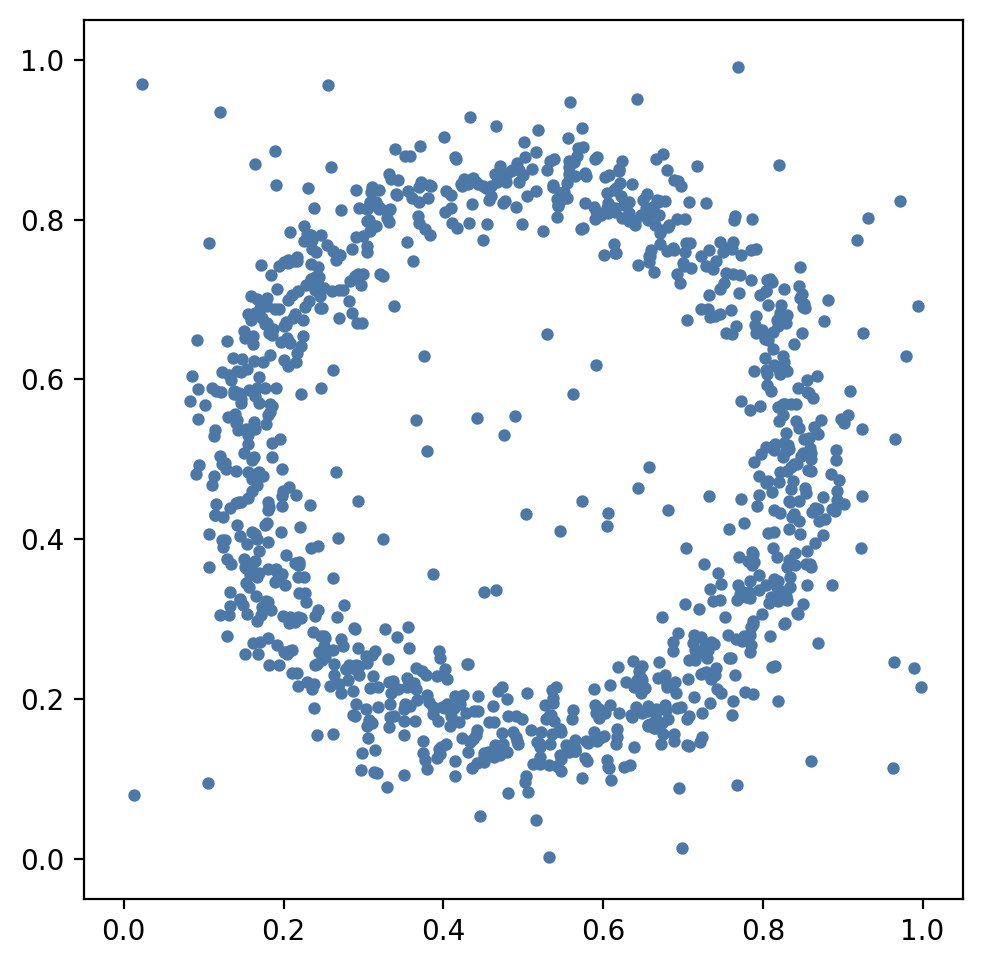}
    \caption{The first point cloud}\label{ExaData1}
  \end{subfigure}
  \quad
  \begin{subfigure}[b]{0.4\textwidth}
    \centering
    \includegraphics[width = 3.5cm]{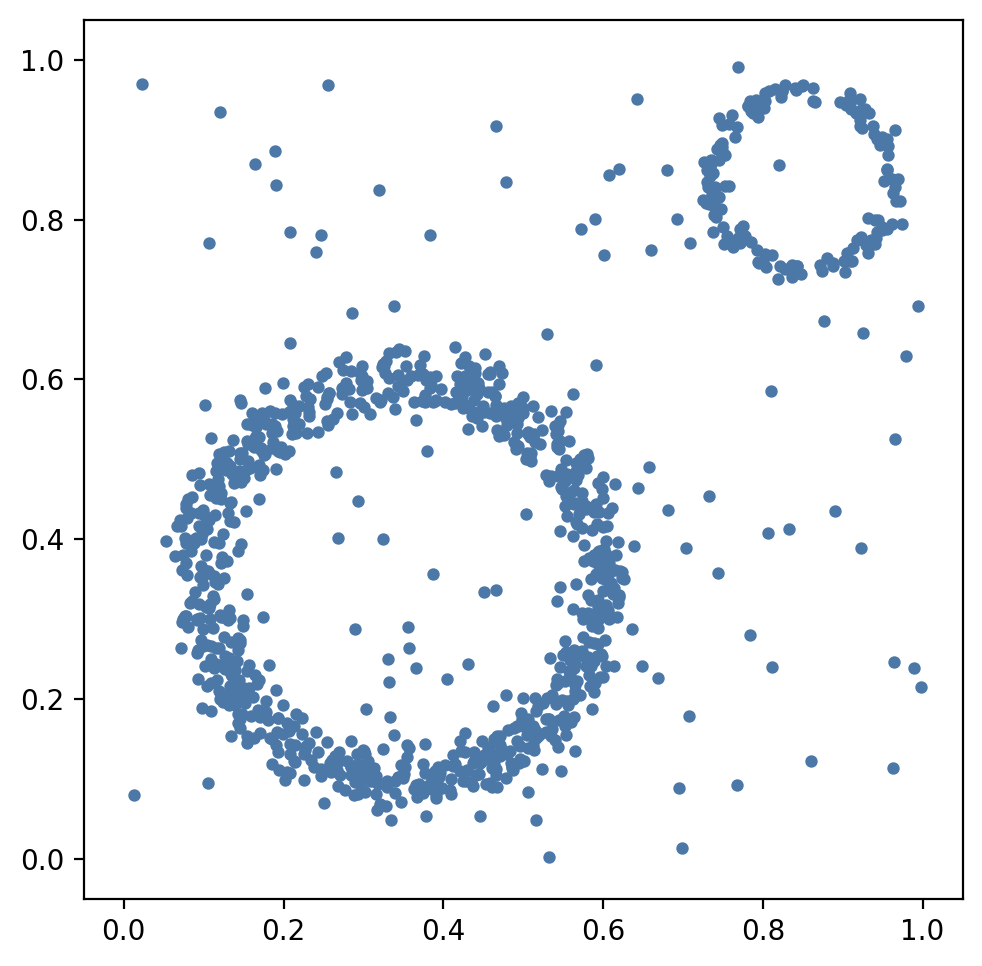}
    \caption{The second point cloud}\label{ExaData2}
  \end{subfigure}
  \caption{{\bf Point clouds with $r=0.1$.}}\label{ExaData}
\end{figure}


\subsection{\bf{Performance Comparison with CLA}}\label{ComCLA}

To evaluate how much more effective RCLA is at analyzing data than CLA, 
we quantitatively compare the two algorithms by measuring 
the bottleneck distance from the PD of each algorithm to that of $X_{\mathrm{shape}}$.

We use the first point cloud $X=X_{\mathrm{shape}}\cup X_{\mathrm{noise}}$ with $r=0.10$ introduced in Section~\ref{Data}. 
Figure~\ref{fig:comparison_cla} shows the graphs and the persistent diagrams of the original data $X$, the CLA result $X_{\delta}^{c}$, and the RCLA result $X_{\delta, k}^{c}$.
From the figure, we observe that the graph obtained by CLA still contains a noticeable noise part,
while the graph obtained by RCLA removes the noise effectively. 
\begin{figure}[h!]
\begin{tabular}{ @{} c @{} c @{} c @{} c @{} }
\multirow{2}{*}[0pt]{%
  \begin{tabular}{@{}c@{}}$X$\\[2pt]\includegraphics[width=0.2\linewidth]{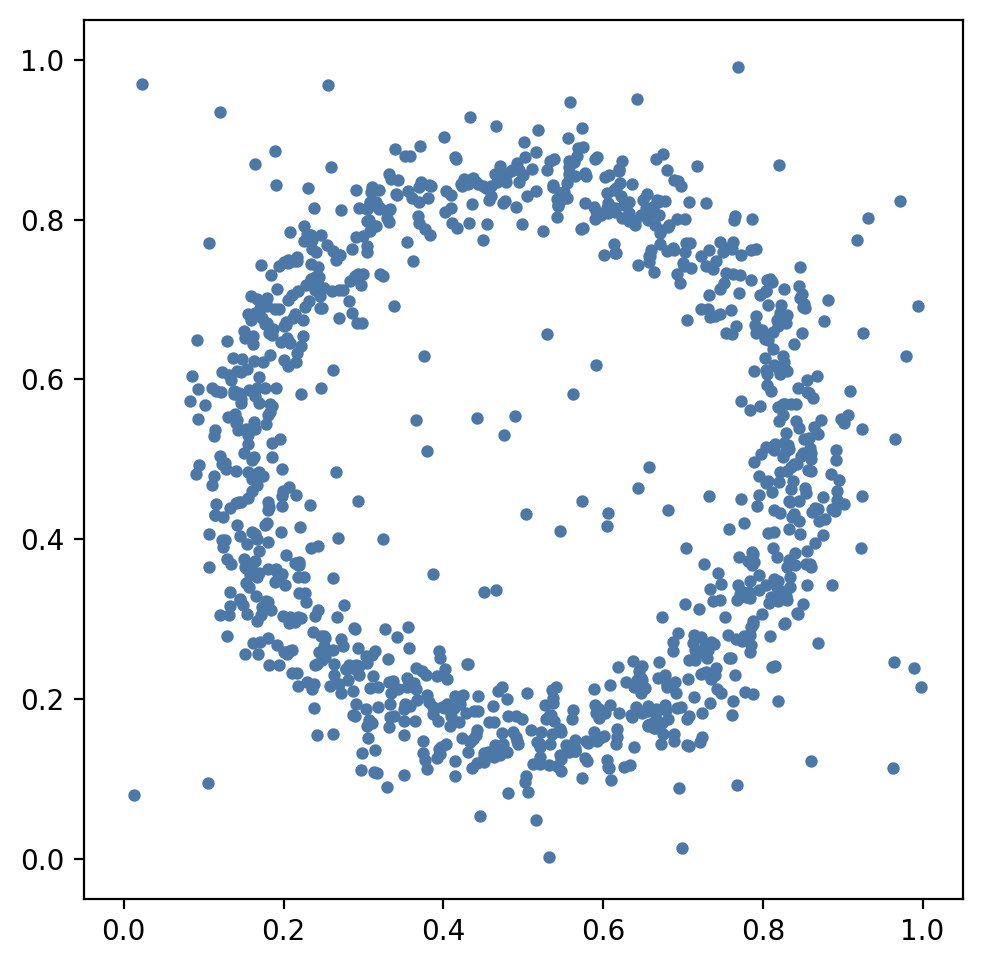}\end{tabular}%
}
&
\begin{tabular}{@{}c@{}}$X_{\mathrm{shape}}$\\[2pt]\includegraphics[width=0.2\linewidth]{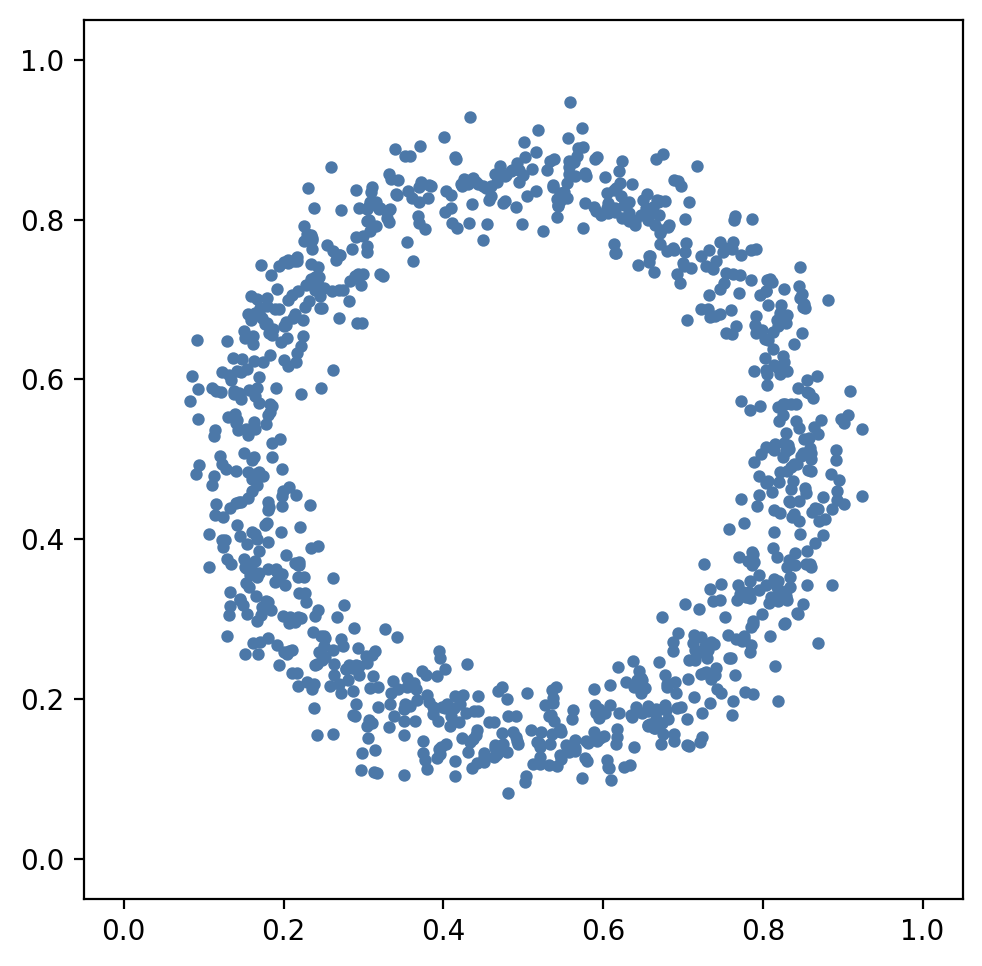}\end{tabular}
&
\begin{tabular}{@{}c@{}}$X_{\delta}^{c}$\\[2pt]\includegraphics[width=0.2\linewidth]{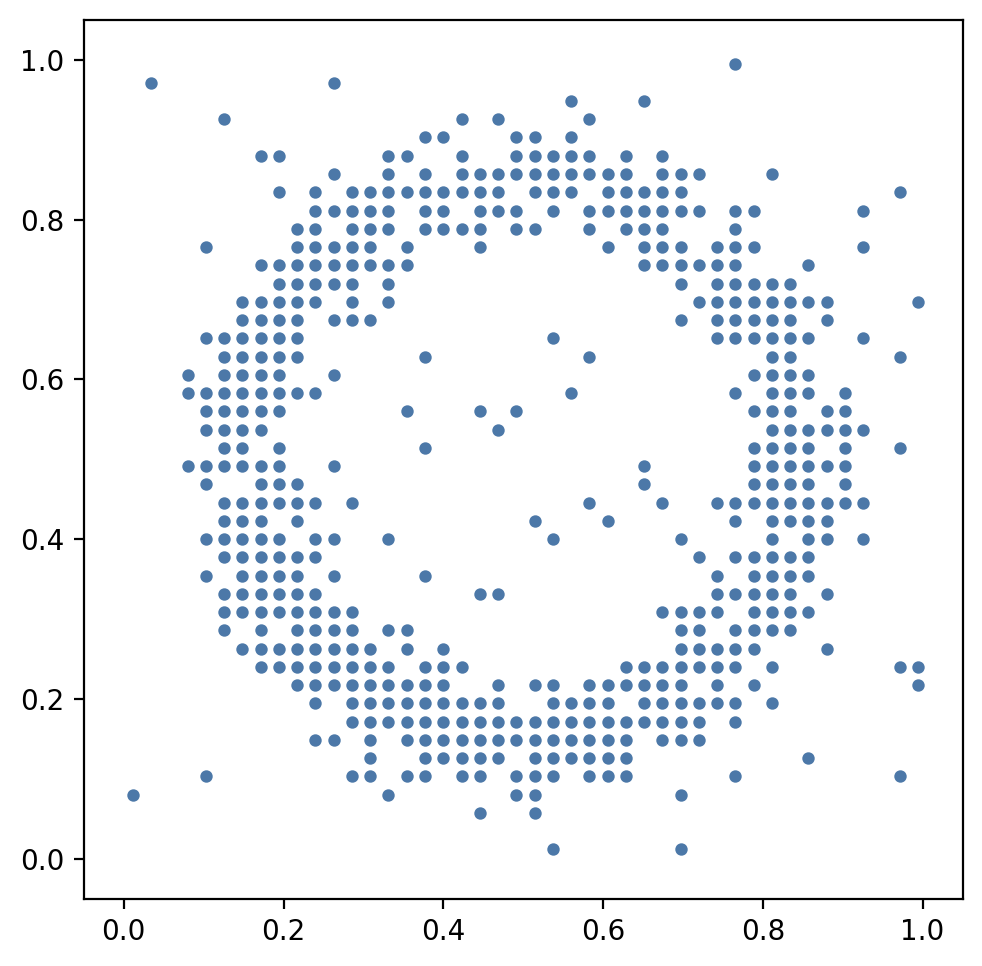}\end{tabular}
&
\begin{tabular}{@{}c@{}}$X_{\delta, k}^{c}$\\[2pt]\includegraphics[width=0.2\linewidth]{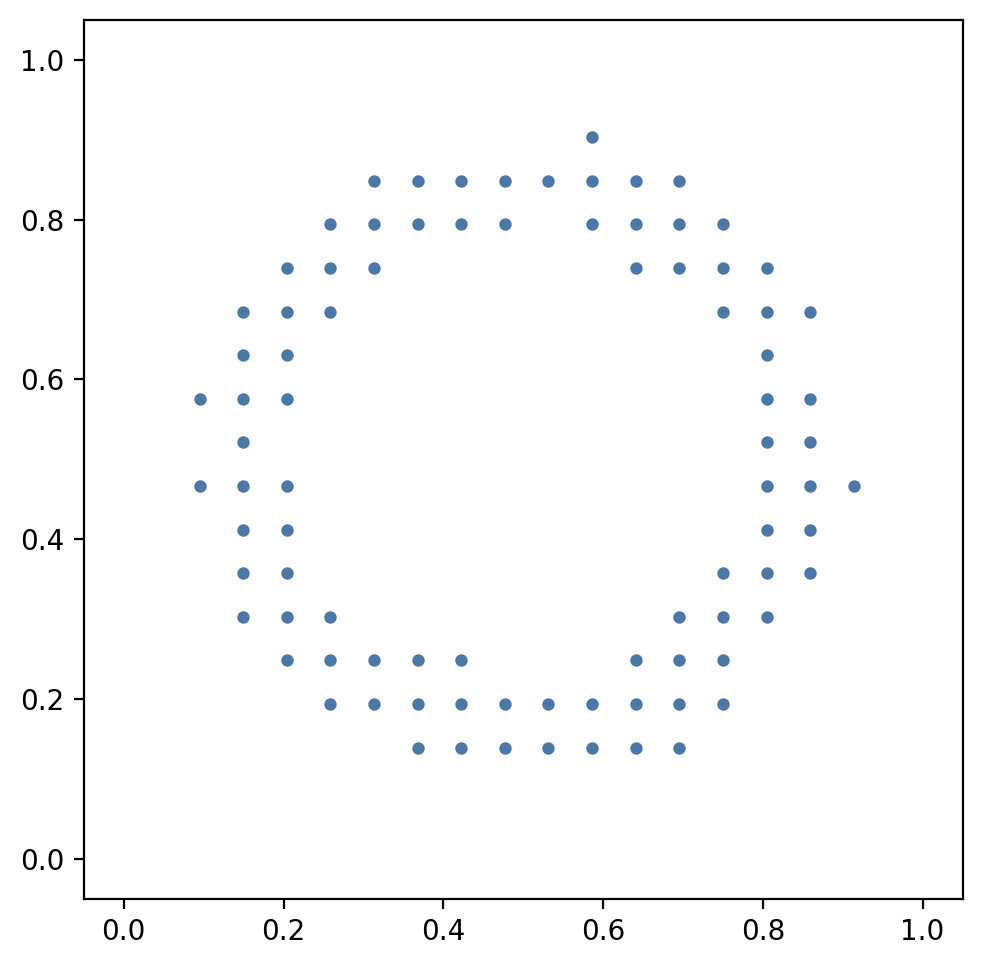}\end{tabular}
\\
&
\includegraphics[width=0.2\linewidth]{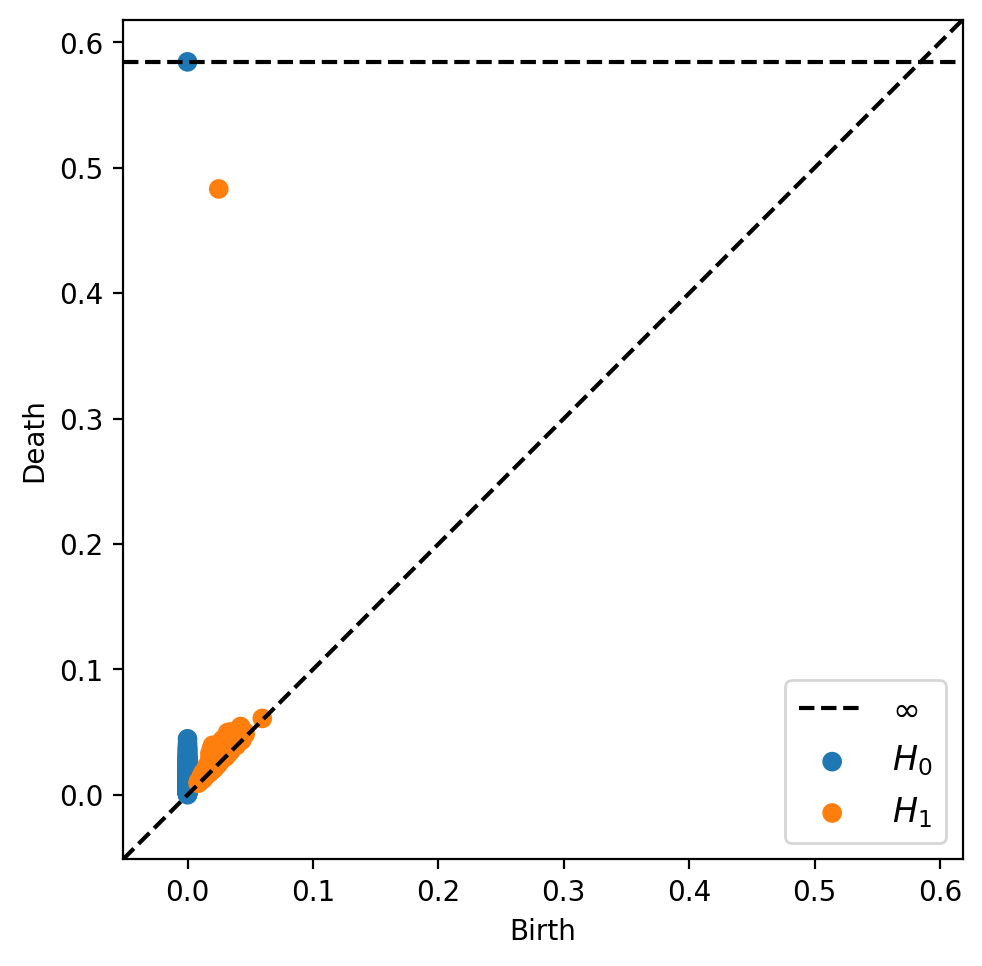}
&
\includegraphics[width=0.2\linewidth]{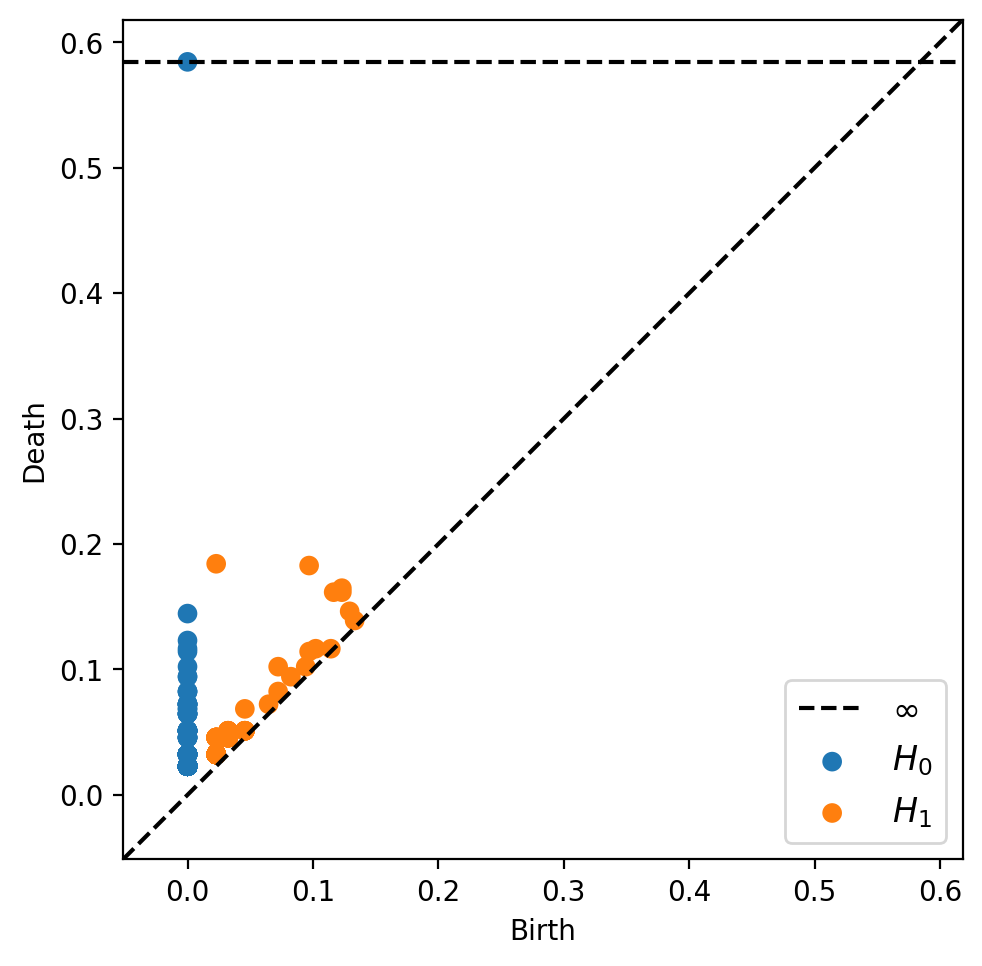}
&
\includegraphics[width=0.2\linewidth]{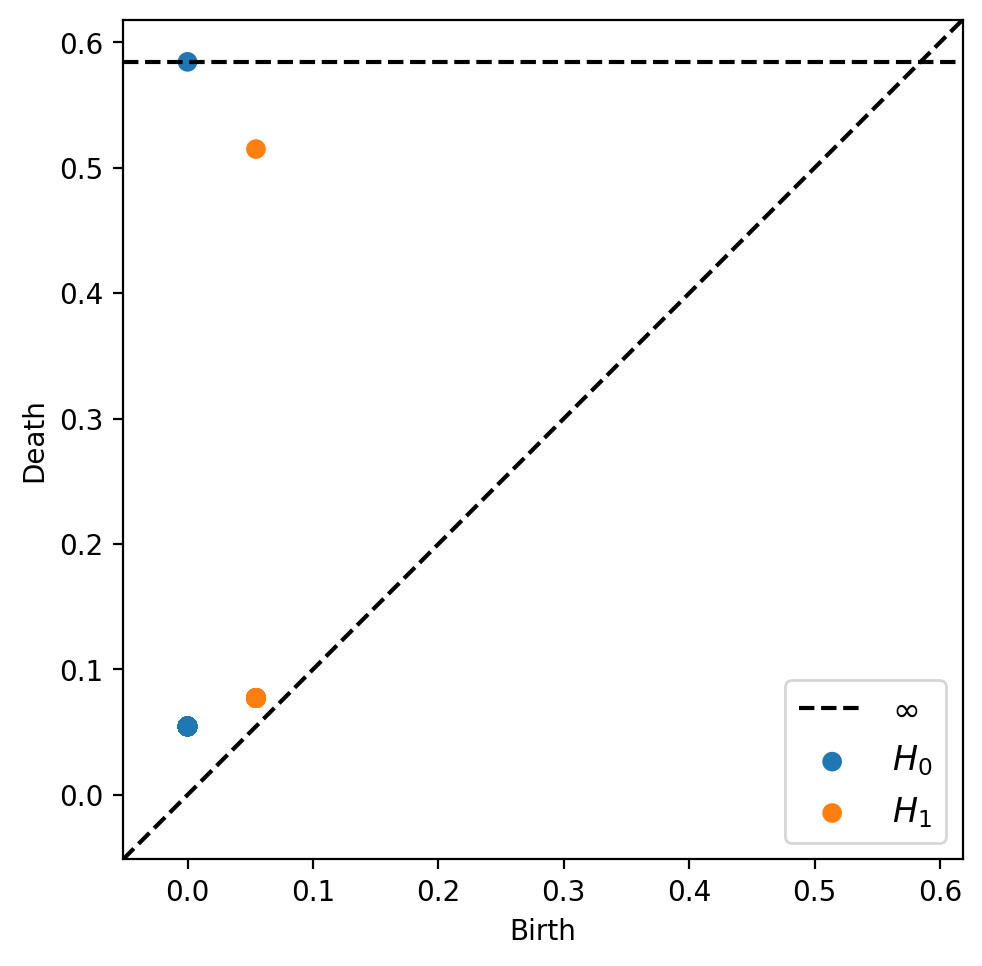}
\\
\end{tabular}
\caption{{\bf Comparison of point clouds and their persistence diagrams.}}
\label{fig:comparison_cla}
\end{figure}

This observation is quantitatively supported by the bottleneck distances to the $H_1$ persistence diagram of $X_{\mathrm{shape}}$ in Table~\ref{table:CLAvsRCLA}: 
the distance is $0.229107$ for CLA, whereas it is only $0.031811$ for RCLA.
Hence, RCLA preserves the topological features of $X_{\mathrm{shape}}$ significantly better than CLA.

\begin{table}[h]
\centering
{\setlength{\tabcolsep}{20pt}
\begin{tabular}{cc}
\toprule
Method & $d_B$ \\
\midrule
CLA  & 0.229107 \\
RCLA & 0.031811 \\
\bottomrule
\end{tabular}}
\caption{Bottleneck distances between the persistence diagrams of $X_{\mathrm{shape}}$ and those reconstructed by CLA and RCLA.}\label{table:CLAvsRCLA}
\end{table}




\subsection{\bf{Performance Comparison with Denoising Algorithms}}\label{ComDenoising}

We evaluate the denoising performance of RCLA against three established algorithms: Adaptive DBSCAN, LDOF, and LUNAR. 
The first comparison method, Adaptive DBSCAN~\cite{ZZLLL24}, is an improved version of the well-known DBSCAN algorithm. 
LDOF~\cite{ZHJ09} is a relatively recent denoising method based on the $k$-nearest neighbor algorithm, while LUNAR~\cite{GHNN22} is a representative machine learning-based denoising model.
These methods were employed in order to evaluate the effectiveness of RCLA in removing background noise and, at the same time, preserving the topological features of the underlying shape after denoising.

Using the second point cloud $X$ with $r\in\{0.10,0.15,0.20,0.25,0.30\}$ in Section~\ref{Data} where, we compare the ability of each algorithm to remove noise while preserving the underlying topological structure of the data. 
Performance is assessed via the bottleneck distance between the persistence diagram of the denoised output and that of the shape data $X_{\mathrm{shape}}$, where a smaller bottleneck distance indicates better preservation of topological features.

For each noise ratio $r$, the performance of RCLA, Adaptive DBSCAN, LDOF, and LUNAR was measured across $20$ random trials by computing the bottleneck distance between the $H_1$ persistence diagram of the output of each method and that of $X_{\mathrm{shape}}$.

Figure~\ref{fig:ComDenoising} shows, for each method, the best and worst outputs among these trials, selected by the smallest and largest bottleneck distances, respectively.
\begin{figure}[h!]
  \centering
  \begin{tabular}{>{\centering\arraybackslash}m{1.5cm}>{\centering\arraybackslash}m{2.5cm}>{\centering\arraybackslash}m{2.5cm}>{\centering\arraybackslash}m{2.5cm}>{\centering\arraybackslash}m{2.5cm}}
    & RCLA & Adaptive DBSCAN & LDOF & LUNAR \\
    \parbox{1.5cm}{\centering Best\\min($d_B$)} &
    \includegraphics[width=2.5cm]{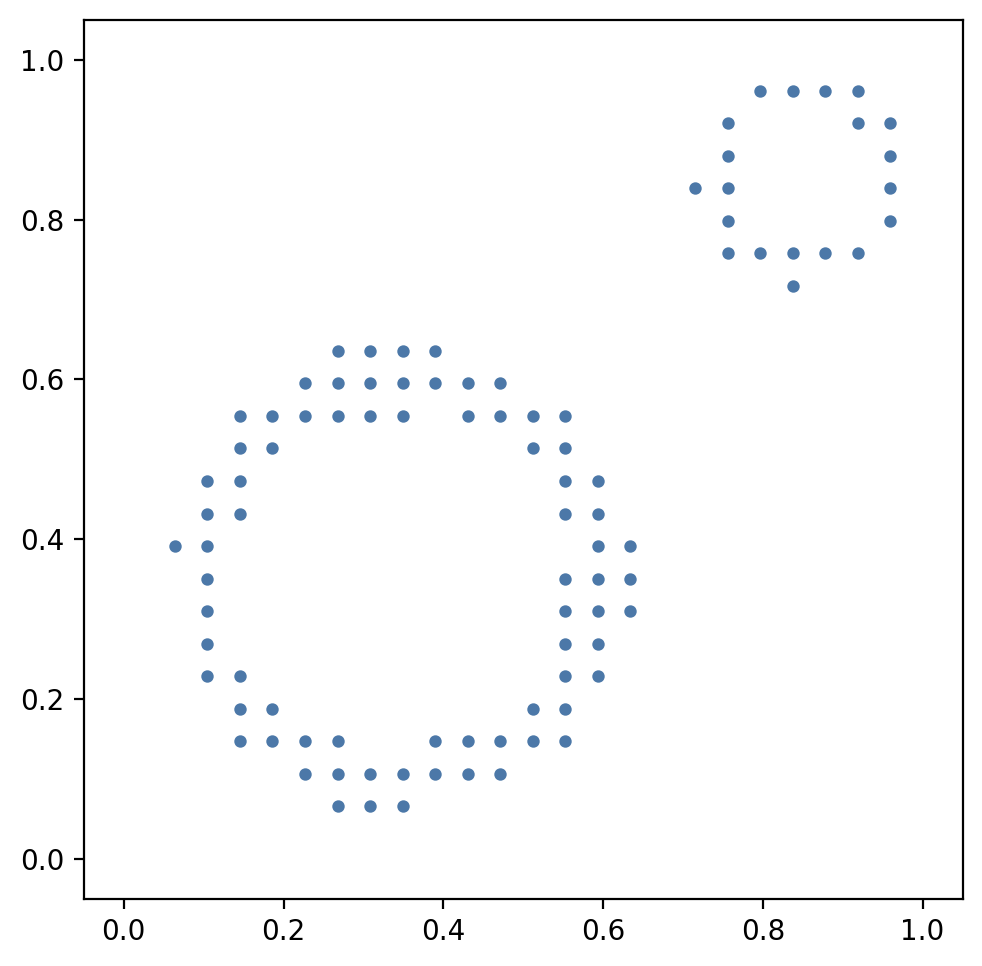} &
    \includegraphics[width=2.5cm]{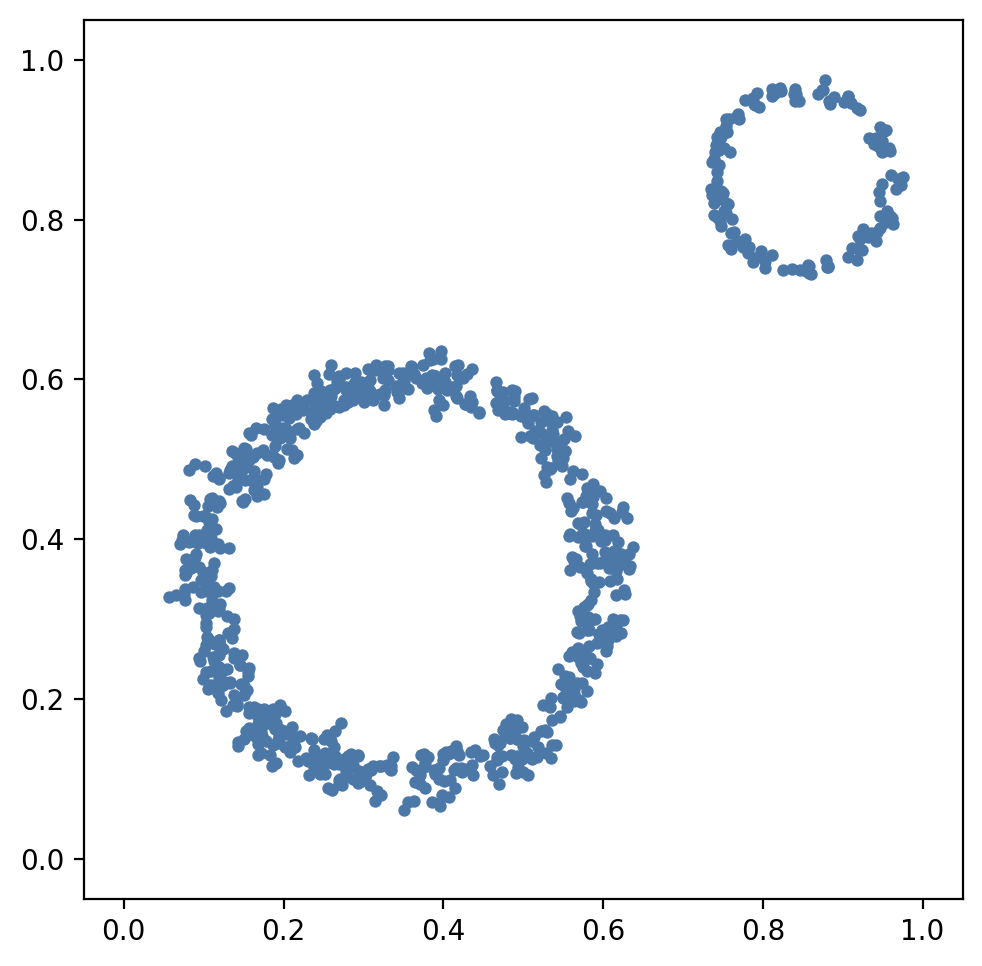} &
    \includegraphics[width=2.5cm]{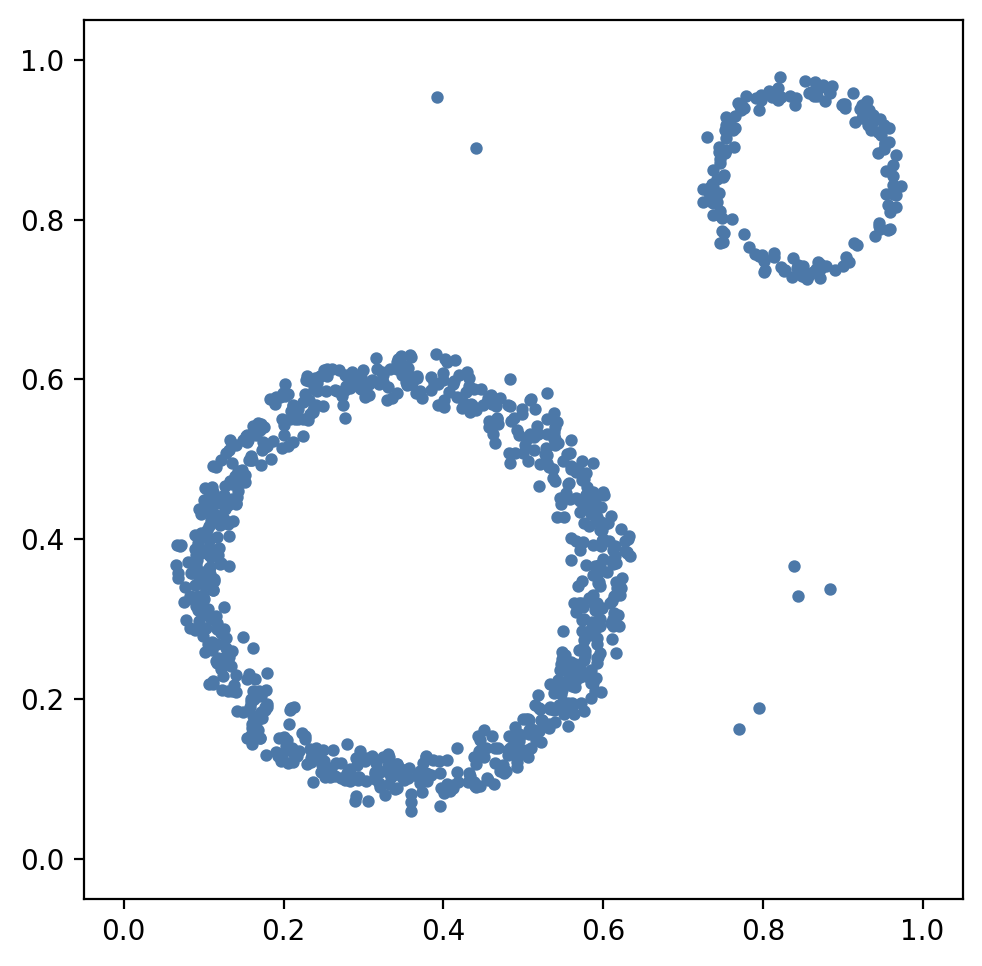} &
    \includegraphics[width=2.5cm]{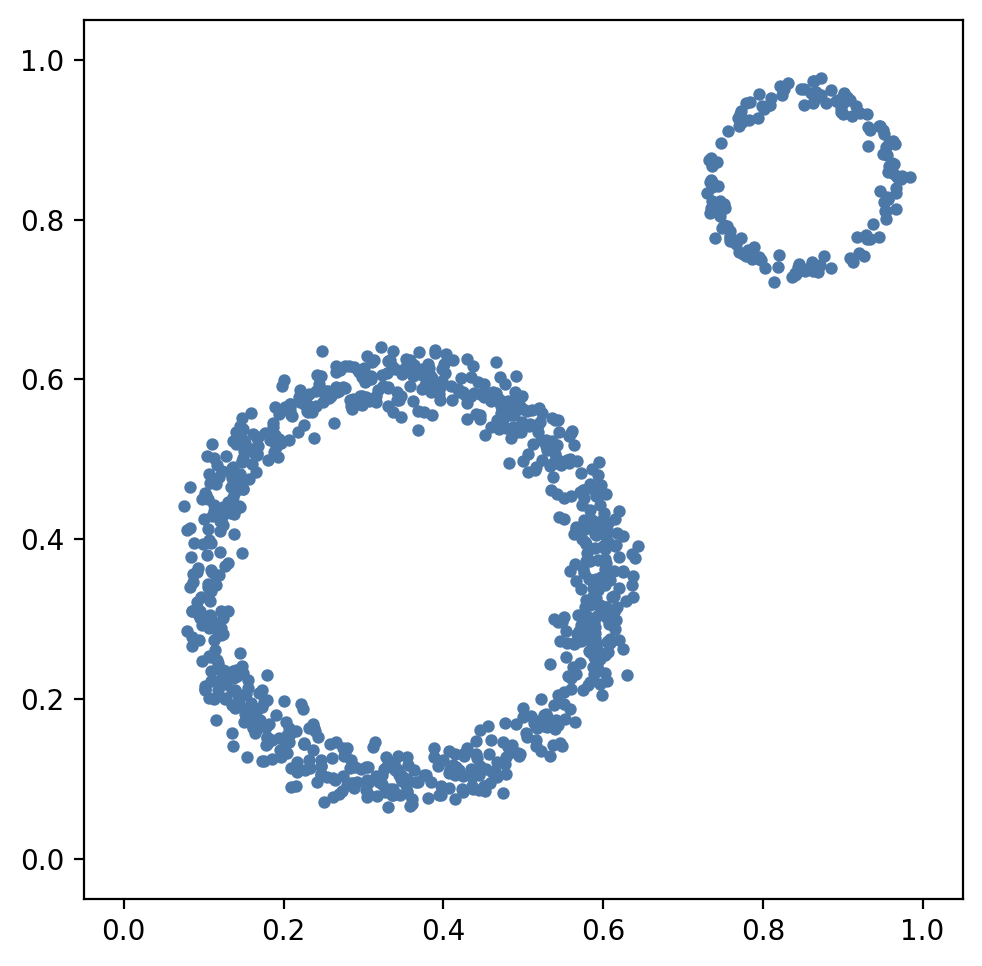} \\
    \parbox{1.5cm}{\centering Worst\\max($d_B$)} &
    \includegraphics[width=2.5cm]{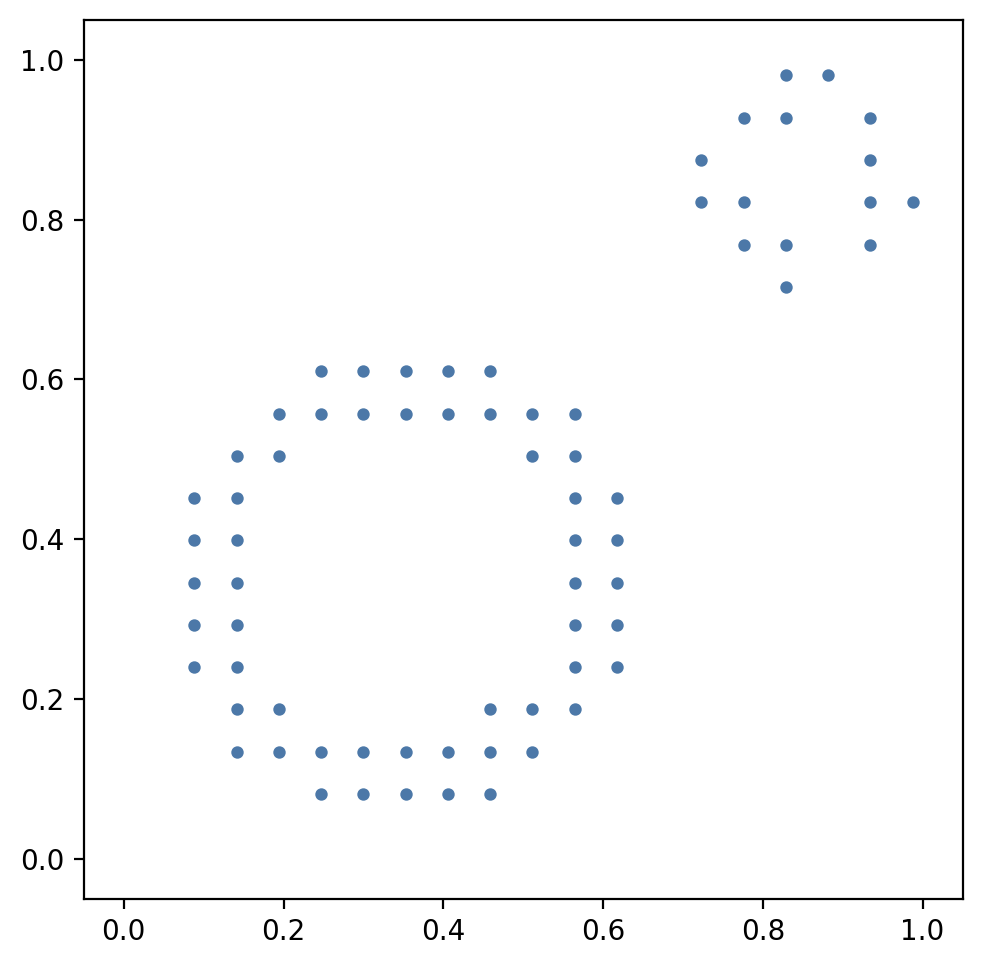} &
    \includegraphics[width=2.5cm]{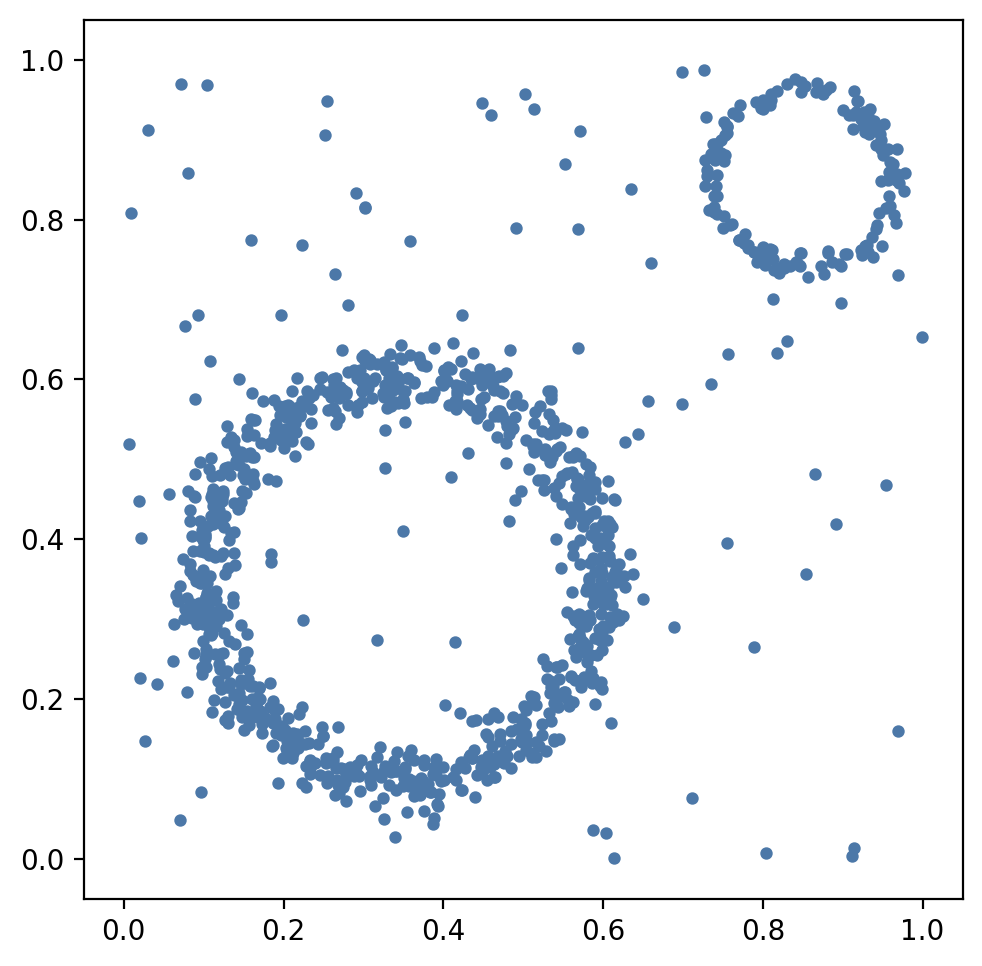} &
    \includegraphics[width=2.5cm]{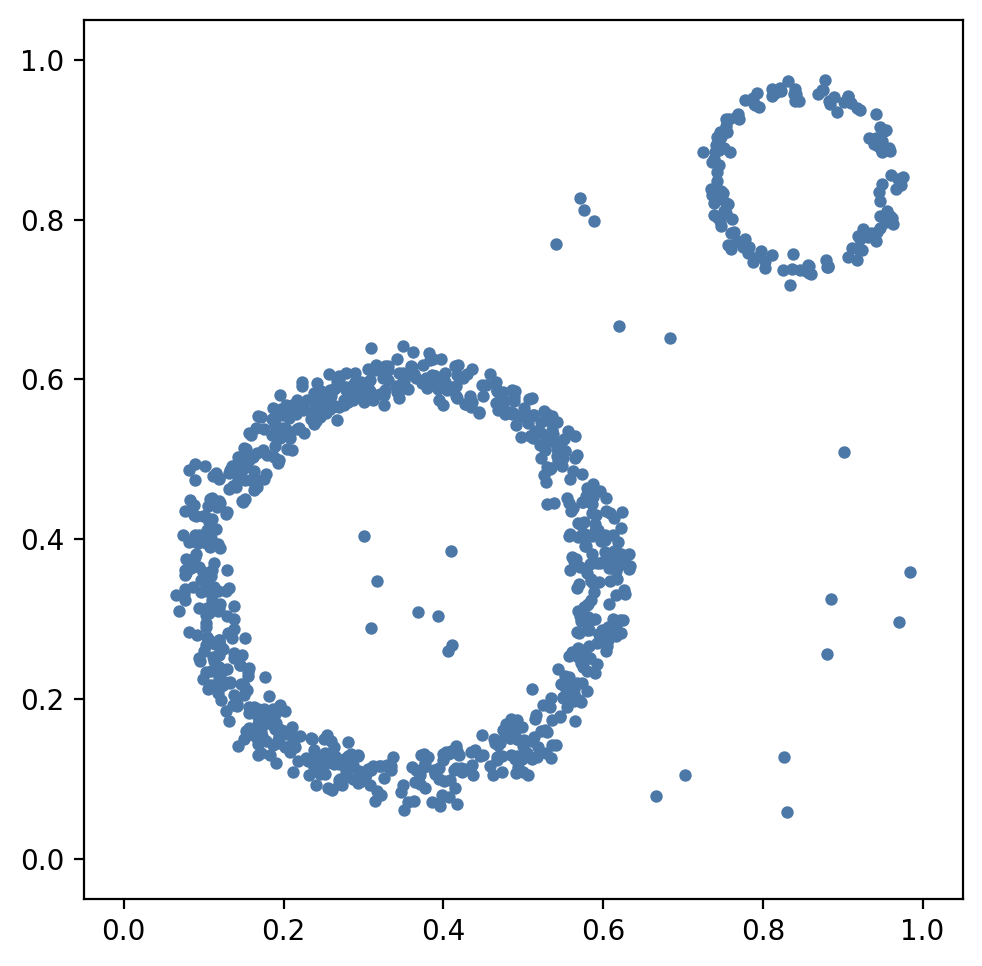} &
    \includegraphics[width=2.5cm]{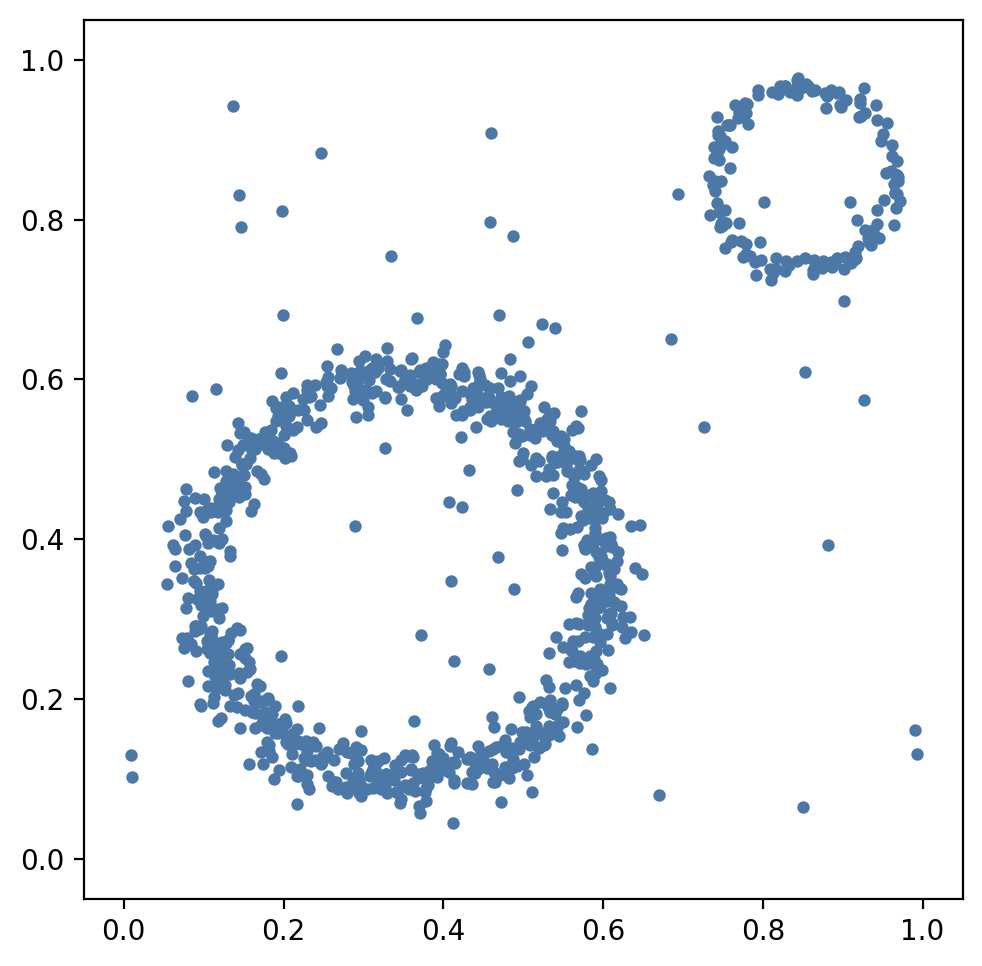} \\
  \end{tabular}
  \caption{\bf{Point clouds from each method, selected as those with the smallest and largest bottleneck distance from $\mathrm{PD}(X_{\text{shape}})$.}}\label{fig:ComDenoising}
\end{figure}
As seen in the figure, RCLA consistently preserves the topology of $X_{\mathrm{shape}}$ most faithfully, whereas the other three methods occasionally succeed in removing noise but often fail to do so.
To reflect this variability, the sample mean and standard deviation of the bottleneck distances are reported in Table~\ref{tab:denoise-comparison}.


\begin{table}[h]
\centering
\begin{tabular}{c|cccc}
\specialrule{0.08em}{0pt}{0pt}
\multirow{2}{*}{$r$} & \multicolumn{4}{c}{(Mean {\footnotesize $\pm$ SD})} \\
\cline{2-5}
& RCLA & Adaptive DBSCAN & LDOF & LUNAR \\
\midrule
0.10 
& 0.051069 {\footnotesize $\pm$ 0.016707}
& 0.090485 {\footnotesize $\pm$ 0.057214}
& 0.113232 {\footnotesize $\pm$ 0.037354}
& 0.054173 {\footnotesize $\pm$ 0.045463} \\

0.15 
& 0.052086 {\footnotesize $\pm$ 0.012052}
& 0.140838 {\footnotesize $\pm$ 0.043557}
& 0.136611 {\footnotesize $\pm$ 0.021376}
& 0.071958 {\footnotesize $\pm$ 0.045652} \\

0.20 
& 0.055251 {\footnotesize $\pm$ 0.012688}
& 0.141782 {\footnotesize $\pm$ 0.049153}
& 0.149163 {\footnotesize $\pm$ 0.021195}
& 0.079880 {\footnotesize $\pm$ 0.040987} \\

0.25 
& 0.051093 {\footnotesize $\pm$ 0.012029}
& 0.140576 {\footnotesize $\pm$ 0.048582}
& 0.159430 {\footnotesize $\pm$ 0.012704}
& 0.094894 {\footnotesize $\pm$ 0.038629}\\

0.30 
& 0.057440 {\footnotesize $\pm$ 0.013408}
& 0.154789 {\footnotesize $\pm$ 0.036262}
& 0.162117 {\footnotesize $\pm$ 0.008935}
& 0.134169 {\footnotesize $\pm$ 0.035352} \\
\specialrule{0.08em}{0pt}{0pt}
\end{tabular}
\caption{{\bf Mean and standard deviation of the bottleneck distance from $\mathrm{PD}(X_{\text{shape}})$ for each method.
}}
\label{tab:denoise-comparison}
\end{table}

As the noise ratio increases, the performance of all methods becomes degraded, and RCLA consistently yields smaller mean values than the other algorithms for all tested noise ratios. 
Moreover, RCLA exhibits a substantially smaller standard deviation than all the other algorithms in every experiment, indicating that its performance is more stable.

{\color{blue}

}


\vspace{0.5cm}
\section{Application to 3D Shape Classification}\label{Implementation}

The purpose of this experiment is to evaluate whether our algorithm can effectively capture the topological features of the data and perform accurate classification even in the presence of noise.

We use the 3D animal shape data from the deformation transfer dataset of Sumner and Popovi\'{c}~\cite{SumPop04}. 
In our experiments, we consider three animal classes: camel, elephant, and horse. 
For each class, $100$ OBJ meshes are randomly selected with replacement from the available pose variations.
From each mesh, a prescribed number $n$ of surface vertices is sampled without replacement. 

Each point cloud is translated so that it is centered in the unit cube $[0,1]^3$, as illustrated in the first row of Figure~\ref{fig:animal_data}, yielding three data classes: Camel, Elephant, and Horse.
To simulate noise contamination with noise ratio $r$, we add points drawn from an HPPP with intensity $\lambda = n \times r$ within $[0,1]^3$ to each point cloud, as depicted in the second row of Figure~\ref{fig:animal_data}, resulting in three additional classes: Noisy-Camel, Noisy-Elephant, and Noisy-Horse.
This gives a total of six data classes.

\begin{figure}[h!]
  \centering
  \includegraphics[width = 9cm]{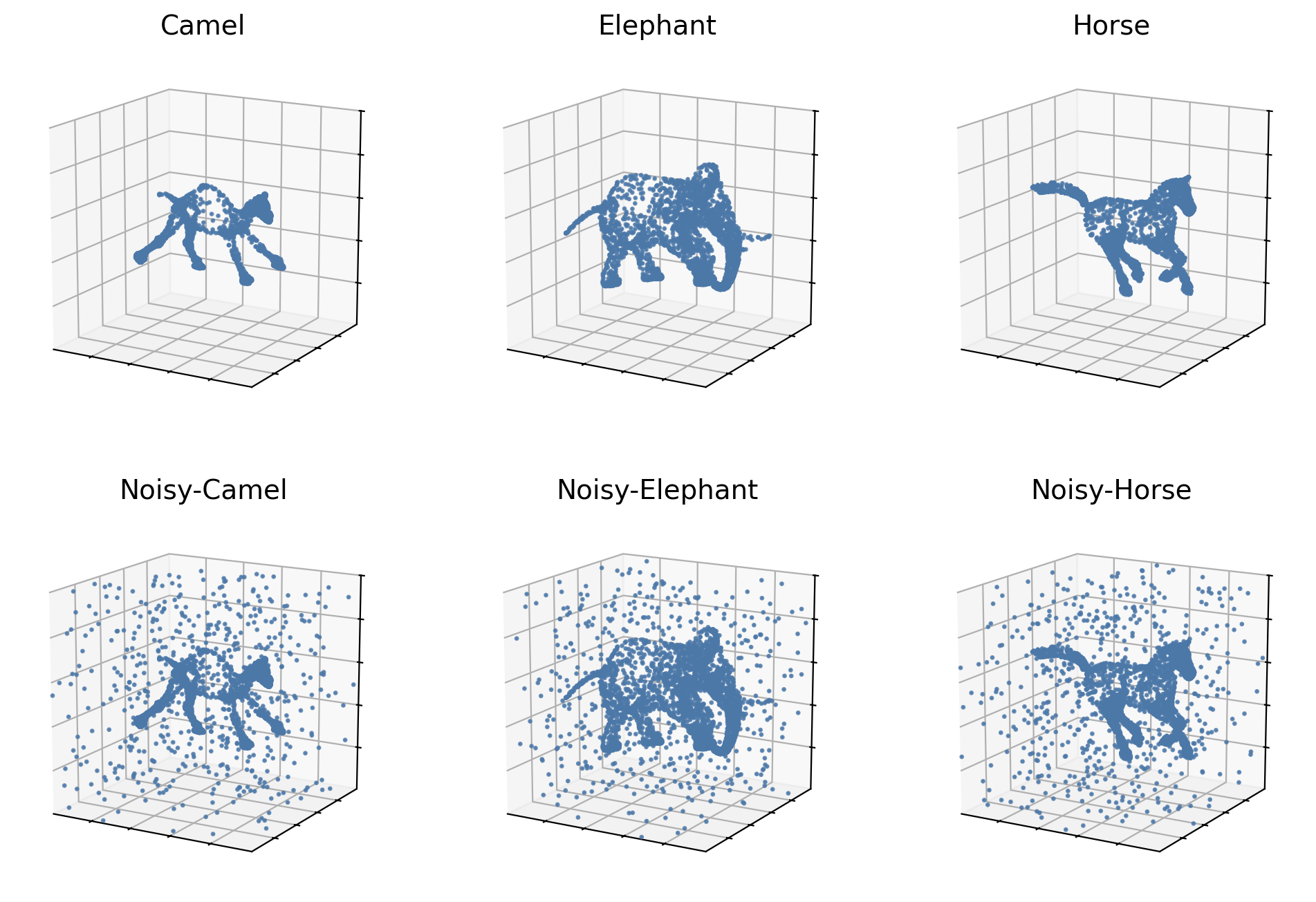}
  \caption{{\bf Six data classes of the 3D animal data with $n=2000$ and $r=0.2$}}\label{fig:animal_data}
\end{figure}

We vary two parameters: $n = 2000$, $3000$, $4000$, $5000$ and $r = 0.10$, $0.15$, $0.20$, $0.25$, $0.30$, yielding $20$ experimental settings.
For each class, $n$ points are randomly sampled from each OBJ mesh, resulting in a total of $100$ samples per class and $600$ samples in total.

Using these six data classes, we conducted classification experiments to evaluate whether RCLA can successfully distinguish the three animal classes: camel, elephant, and horse.
The experiment follows the pipeline illustrated in Figure~\ref{fig:pipeline}.

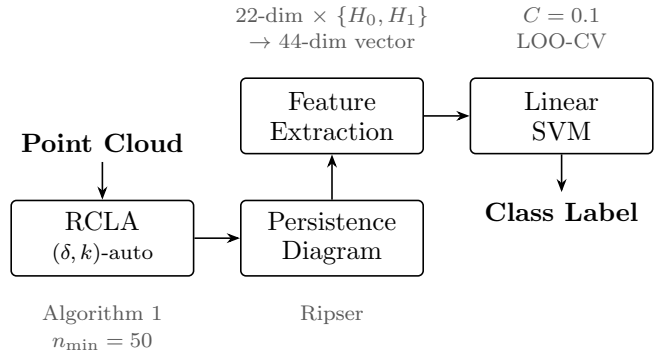
\begin{figure}[t]
\centering
\begin{tikzpicture}[
    node distance=0.6cm,
    box/.style={
        rectangle, draw, rounded corners=2pt,
        minimum height=1.0cm, minimum width=2.4cm,
        align=center, font=\small,
        fill=white, line width=0.6pt
    },
    arr/.style={-{Stealth[length=5pt]}, line width=0.6pt},
    label/.style={font=\scriptsize, text=gray!70!black, align=center}
]

\node[box] (rcla) {RCLA\\[-1pt]{\scriptsize$(\delta,k)$-auto}};
\node[box, right=of rcla] (ripser) {Persistence\\[-1pt]Diagram};
\node[box, above=of ripser] (feat) {Feature\\[-1pt]Extraction};
\node[box, right=of feat] (svm) {Linear\\[-1pt]SVM};

\draw[arr] (rcla) -- (ripser);
\draw[arr] (ripser) -- (feat);
\draw[arr] (feat) -- (svm);

\node[label, below=0.25cm of rcla] {Algorithm~\ref{alg:auto-select}\\$n_{\min}=50$};
\node[label, below=0.25cm of ripser] {Ripser};
\node[label, above=0.25cm of feat] {$22$-dim $\times$ $\{H_0,H_1\}$\\$\to$ $44$-dim vector};
\node[label, above=0.25cm of svm] {$C=0.1$\\LOO-CV};

\node[font=\small, above=0.5cm of rcla] (input) {\bf{Point Cloud}};
\draw[arr] (input) -- (rcla);
\node[font=\small, below=0.5cm of svm] (output) {\bf{Class Label}};
\draw[arr] (svm) -- (output);

\end{tikzpicture}
\caption{{\bf The pipeline for classification.}}
\label{fig:pipeline}
\end{figure}

In the pipeline, a point cloud is reduced via RCLA with automatic $(\delta,k)$ selection, and a persistence diagram is computed with Ripser~\cite{Bau21}. For each of $H_0$ and $H_1$, we extract $22$ descriptive statistics---mean, standard deviation, min, max, and quartiles ($Q_{25}$, $Q_{50}$, $Q_{75}$) of birth~$b$, death~$d$, and lifetime~$\ell=d{-}b$, plus the point count and total persistence~$\sum\ell$---yielding a $44$-dimensional feature vector~\cite{ali2023survey} that is classified by a linear Support Vector Machine (SVM) with leave-one-out cross-validation.

Table~\ref{tab:setting-acc} reports the classification accuracy for each configuration.
For all $20$ settings, the accuracy remained above $99.50\%$, and $9$ out of $20$ achieved $100.00\%$.
The overall mean accuracy is $99.88\%$.

\begin{table}[htbp]
\centering
\begin{tabular}{cccccc}
\toprule
$n$ & $r=0.10$ & $r=0.15$ & $r=0.20$ & $r=0.25$ & $r=0.30$ \\
\midrule
2000 & 99.83 & 100.00 & 99.67 & 100.00 & 99.50 \\
3000 & 99.83 & 100.00 & 100.00 & 100.00 & 100.00 \\
4000 & 99.67 & 100.00 & 99.83 & 100.00 & 100.00 \\
5000 & 99.83 & 99.83 & 100.00 & 99.83 & 99.83 \\
\bottomrule
\end{tabular}
\caption{{\bf Classification accuracy ($\%$).}}
\label{tab:setting-acc}
\end{table}

\vspace{0.5cm}
\section{Conclusion}\label{Conclusion}

We propose RCLA, an improved version of CLA, which simultaneously performs data reduction and noise removal in point clouds. 
In contrast to CLA, which was originally designed for data reduction and is therefore sensitive to background noise, RCLA effectively handles both tasks in a single framework.
The stability theorem for RCLA provides a theoretical justification for the preservation of topological features in the presence of HPPP noise.
The HPPP is commonly used to model background noise that is uniformly distributed over a spatial domain.
We devise Algorithm~\ref{alg:auto-select}, which effectively determines $\delta$ and $k$ for data with such background noise.
With this automatic parameter selection, RCLA can be applied to all experimental settings in this paper. 

We first compare RCLA with CLA to examine the effect of the proposed refinement using the point cloud (1) of Figure~\ref{ExaData}.
As shown in Table~\ref{table:CLAvsRCLA}, the bottleneck distance to the $H_1$ persistence diagram of $X_{\mathrm{shape}}$ drops from $0.229107$ for CLA to just $0.031811$ for RCLA, demonstrating that RCLA is far more effective at preserving the topological structure of the underlying shape.

As the RCLA method also removes background noise, we compared its performance with several denoising algorithms of different types; the results are shown in Table~\ref{tab:denoise-comparison}.
A smaller mean bottleneck distance indicates that the output is topologically closer to $X_{\mathrm{shape}}$, while a smaller standard deviation indicates greater stability across trials. 
In our experiments, RCLA shows both the smallest mean bottleneck distance and the smallest standard deviation over all noise configurations. 
By contrast, Adaptive DBSCAN, LDOF, and LUNAR yield larger bottleneck distances and greater variability, which is also illustrated in Figure~\ref{fig:ComDenoising}. 
These results indicate that RCLA recovers the topological structure of the underlying shape more effectively and more consistently than the other methods.


As an application, we construct a pipeline based on RCLA for shape classification of three-dimensional dataset, including both noise-free and noisy classes.
The pipeline includes the vectorization of topological summaries and the subsequent machine learning step.
Table~\ref{tab:setting-acc} reports almost perfect classification accuracy across all $20$ settings, with a mean accuracy of $99.88\%$ ranging from $99.50\%$ to $100.00\%$.
The results demonstrate that RCLA effectively preserves the necessary topological features by removing noise, thereby distinguishing between the three animal classes even in noisy environments.





\section*{\bf{Acknowledgement}}

All authors equally contribute this paper. 
The work of Seung Yeop Yang was supported by the National Research Foundation of Korea(NRF) grant funded by the Korea government(MSIT) (RS-2022-NR070839).
The work of Semin Oh and Seung Yeop Yang was supported by Global - Learning \& Academic research institution for Master’s·PhD students, and Postdocs (G-LAMP) Program of the National Research Foundation of Korea (NRF) grant funded by the Ministry of Education (No. RS-2023-00301914).
The work of Seonmi Choi was supported by the Basic Science Research Program through the National Research Foundation of Korea(NRF) funded by the Ministry of Education (No. RS-2021-NR065036).
The work of Jeong Rye Park was supported by the Basic Science Research Program through the National Research Foundation of Korea(NRF) funded by the Ministry of Education (No. RS-2021-NR065455).

\end{document}